\newcommand{\nosemic}{\renewcommand{\@endalgocfline}{\relax}}
\newcommand{\dosemic}{\renewcommand{\@endalgocfline}{\algocf@endline}}
\let\oldnl\nl
\newcommand{\nonl}{\renewcommand{\nl}{\let\nl\oldnl}}
\newtheorem{proposition}{Proposition}
\newtheorem{corollary}{Corollary}
\begin{document}


\title{Pull-Based Query Scheduling for Goal-Oriented Semantic Communication}

\author{Pouya Agheli, 
\IEEEmembership{Member, IEEE}, Nikolaos Pappas, 
\IEEEmembership{Senior Member, IEEE}, and Marios Kountouris, \IEEEmembership{Fellow, IEEE}.
\thanks{P. Agheli and M. Kountouris are with the Communication Systems Dept., EURECOM, France, email: \texttt{agheli@eurecom.fr}. M. Kountouris is also with the Dept. of Computer Science and Artificial Intelligence, University of Granada, Spain, email: \texttt{mariosk@ugr.es}. N. Pappas is with the Dept. of Computer and Information Science, Linköping University, Sweden, email: \texttt{nikolaos.pappas@liu.se}. 
}}

\maketitle

\begin{abstract}
This paper addresses query scheduling for goal-oriented semantic communication in pull-based status update systems. We consider a system where multiple sensing agents (SAs) observe a source characterized by various attributes and provide updates to multiple actuation agents (AAs), which act upon the received information to fulfill their heterogeneous goals at the endpoint. A hub serves as an intermediary, querying the SAs for updates on observed attributes and maintaining a knowledge base, which is then broadcast to the AAs. The AAs leverage the knowledge to perform their actions effectively. To quantify the semantic value of updates, we introduce a \emph{grade of effectiveness} (GoE) metric. Furthermore, we integrate \emph{cumulative perspective theory} (CPT) into the long-term effectiveness analysis to account for risk awareness and loss aversion in the system. Leveraging this framework, we compute effect-aware scheduling policies aimed at maximizing the expected discounted sum of CPT-based total GoE provided by the transmitted updates while complying with a given query cost constraint. To achieve this, we propose a \emph{model-based} solution based on dynamic programming and \emph{model-free} solutions employing state-of-the-art deep reinforcement learning (DRL) algorithms. Our findings demonstrate that effect-aware scheduling significantly enhances the effectiveness of communicated updates compared to benchmark scheduling methods, particularly in settings with stringent cost constraints where optimal query scheduling is vital for system performance and overall effectiveness.
\end{abstract}

\begin{IEEEkeywords}
Goal-oriented semantic communication, status update systems, pull-based model, query scheduling.
\end{IEEEkeywords}

\section{Introduction}
Effectiveness remains a critical component in the rapidly evolving realm of communication and connectivity technologies, encompassing areas such as digital healthcare and digital twins. While efficiency often takes precedence, addressing effectiveness in processing, computing, and conveying information is equally important. Addressing this dimension enables us to tackle critical questions, such as: ``Which information needs to be communicated?'' and ``When is the right time to communicate it?''. 
In this context, communication effectiveness has been explored within the framework of the goal-oriented and semantic communication paradigms, where information is generated and transmitted \emph{only when} it has the potential to achieve the \emph{desired effect} or create the \emph{intended impact} at the endpoint to fulfill a specific goal \cite{kountouris2021semantics,popovski2020semantic,Strinati2024}. This concept significantly contributes to system scalability and efficient resource utilization by eliminating the need to acquire, process, and transmit information that turns out to be ineffective, irrelevant, or unnecessary.

The \emph{pull-based} communication model has significantly advanced the integration of effectiveness in status update systems. Unlike the traditional push-based model, where updates are sent at the source's discretion, the pull-based model allows the source to provide information in response to \emph{queries} from the endpoint, which controls the timing and nature of updates \cite{pullB, pullC, pullD, pullDD, pullE}. Combined with an \emph{effect-aware} querying policy, this approach can improve the effectiveness of communicated updates by an average of up to $149\%$ \cite{agheli2023effective}, leveraging timely knowledge of the source's evolution and channel conditions at the endpoint. As a result, the pull-based update model is considered a promising strategy for enhancing effectiveness in status update systems. However, scenarios involving multiple sources or a single source with multiple attributes remain underexplored from an effectiveness perspective. In such cases, the primary challenge lies in \emph{scheduling} updates to optimize query timing and achieve sustained high effectiveness over time. Addressing this challenge is essential for realizing the full potential of effectiveness in these complex systems.

When analyzing long-term effectiveness, the axioms of classical expected utility theory (EUT), where the weight of an occurrence is determined by its probability, could be challenged in the context of risk-aware semantic decision-making. Decisions involving risky prospects, perceptual context evaluation, and semantic utility valuation reveal several effects that the utility theory fails to capture. One prominent effect is \emph{certainty}, which suggests that merely probable outcomes are undervalued compared to certain ones. For example, in human-centric applications, there is often a preference to guarantee achieving a significant portion of a goal rather than risk achieving the entire goal with less certainty. Decision-making involving human agents and human-in-the-loop control of critical applications tends to be more conservative, prioritizing safety and performance guarantees over risk-taking, even when the latter has the potential to maximize the utility function. \emph{Cumulative prospect theory} (CPT) was developed as an alternative framework to EUT, addressing subjectivity and behavioral effects in decision-making \cite{kahneman1984choices, luce1991rank, tversky1992advances}. This theory offers a more refined and nuanced framework that captures how individuals perceive gains and losses relative to a reference point. It maps the outcomes to utility via a \emph{value function} and transforms the perceived probabilities of those outcomes into \emph{decision weights} using a \emph{weighting function}. These elements allow CPT to account for behavioral phenomena such as loss aversion and the overweighting of rare events. This makes CPT especially relevant in scenarios where infrequent but impactful updates matter, such as semantic communications, which prioritize information based on perceived significance rather than mere frequency. Consequently, CPT provides a more realistic basis for modeling decision-making under uncertainty.

In this paper, we investigate a pull-based end-to-end status update system in which multiple sensing agents monitor a source and provide updates to a group of actuation agents, each taking actions and working toward achieving \emph{heterogeneous} goals at the endpoint. The source comprises multiple attributes, and updates about these attributes are routed through a hub that dynamically determines which specific attribute to query at any given time. 
To assess the effectiveness of updates, we utilize the \emph{grade of effectiveness} concept introduced in \cite{agheli2024integrated}, incorporating two key semantic metrics: \emph{freshness} and \emph{usefulness}. Freshness quantifies how quickly an update becomes outdated after successfully reaching its endpoint. Usefulness, on the other hand, captures the update's importance at the endpoint, determined by its relevance to the specific goal. Consequently, the magnitude of an update's impact depends on the extent to which it satisfies the requirements of the goal. 

We formulate a scheduling problem to maximize the expected discounted sum of the total grade of effectiveness of communicated updates, subject to a query cost constraint. In this context, we depart from the classical EUT and leverage CPT to define long-term effectiveness. By solving this scheduling problem, we derive a class of effect-aware query scheduling policies for the hub. To this end, we propose two distinct approaches: \textit{(i)} A \emph{model-based} solution utilizing an iterative algorithm grounded in dynamic programming, and \textit{(ii)} \emph{model-free} solutions leveraging learning-based iterative algorithms, specifically using a range of well-established deep reinforcement learning (DRL) methods. We evaluate the performance of our proposed solutions against benchmark approaches through simulations, focusing on their effectiveness. The results show that effect-aware scheduling significantly enhances both the long-term effectiveness and the system's reliability, ensuring a minimum grade of effectiveness. In addition, the findings reveal that effect-aware scheduling delivers substantial performance improvements in scenarios with stringent cost constraints. Additionally, model-free approaches exhibit superior scalability compared to the model-based approach, making them more suitable for complex, real-world applications.

\subsection{Related Works}
Our work falls within the realm of query scheduling in pull-based communication systems. Although the pull-based model has been thoroughly explored, the scheduling problem aimed at enhancing the goal-oriented effectiveness of communicated updates under this model remains relatively unexplored. Existing studies on this problem, as highlighted in the literature, focus mainly on achieving accurate estimates of the source’s state over time \cite{chiariotti2022scheduling, holm2023goal, bui2023scheduling, cao2023goal, raghuwanshi2024goal}. These studies typically employ various forms of Kalman filters, relying on prior knowledge of the source’s dynamics at the estimator. The state of the source in these works evolves according to linear or nonlinear dynamics, with the estimation error typically quantified using mean square error (MSE). The MSE is widely used as a fundamental metric for evaluating communication effectiveness and serves as the basis for formulating the scheduling problem in these approaches.

In \cite{chiariotti2022scheduling} and \cite{holm2023goal}, the authors address the sensor scheduling problem with the aim of reducing estimation error in query responses using \emph{value of information} (VoI) metrics. These works were extended in \cite{bui2023scheduling, cao2023goal, raghuwanshi2024goal}, where energy efficiency was incorporated as an additional maximization objective alongside minimizing state estimation error. Specifically, \cite{bui2023scheduling} sought to capture VoI by accounting for the sensors' observation noise and channel conditions, while \cite{cao2023goal} assessed the importance of updates based on their binary information value within a two-state environment. However, none of these works has yet considered the freshness of updates, which contrasts with solving scheduling or queuing problems aimed at minimizing the \emph{age of information} (AoI), which quantifies freshness, along with its variants \cite{AoIA, AoIB, AoIC, AoID, AoIE, stamatakis2023optimizing, delfani2024semantics, delfani2024optimizing}. Some studies, such as in \cite{ayan2019age}, have sought to balance VoI and AoI by developing joint schedulers to optimize this trade-off. Despite these advances, significant gaps remain in exploring how query scheduling can enhance effectiveness when managing systems with multiple attributes. 

The query scheduling problem we address is built on the grade of effectiveness metric, incorporating cumulative prospect theory for long-term effectiveness analysis. This framework provides a more comprehensive model that accounts for multiple attributes and integrates risk awareness. By enabling the evaluation of uncertain outcomes with subjective decision-making aspects, it is particularly suited for scenarios where attributes such as information freshness and usefulness must be considered simultaneously under risk conditions rather than optimizing a single attribute in isolation. In \cite{agheli2024access}, we proposed a self-decision multiple access scheme to address a \emph{risk-agnostic} variant of the effectiveness enhancement problem. In that framework, sensing agents autonomously decide whether and when to provide updates in response to incoming queries, guided by a predefined objective. While the overarching goal aligns with the one pursued here, the approach in \cite{agheli2024access} is decentralized and agent-centric, in contrast to the centralized strategy proposed in this work. As such, the two methodologies address complementary layers of the decision-making process, offering a more holistic perspective when considered together. Furthermore, the form of the grade of effectiveness metric in this context differs, leading to a distinct objective function and a unique approach to address the problem.

\subsection{Contributions}
The main contributions of this work can be summarized as follows.
\begin{itemize}
    \item We develop effect-aware query scheduling policies that account for the impact of updates and aim to maximize the expected discounted sum of the total effectiveness grade of communicated updates. This metric is grounded in cumulative prospect theory, incorporating risk awareness into the update scheduling process. The proposed policies are designed to comply with query cost constraints over time. To achieve this, we propose a model-based solution that leverages an iterative algorithm rooted in the dynamic programming approach. 
    
    \item To overcome the complexity and scalability limitations of the model-based approach, we extend our work to include model-free solutions for achieving effect-aware query scheduling policies. This involves leveraging learning-based iterative algorithms and implementing three prominent deep reinforcement learning methods. 

    \item We evaluate the performance of the proposed model-based and model-free solutions through simulations and compare them against benchmark scheduling methods. Our results show that effect-aware scheduling either improves effectiveness compared to the benchmarks with the same number of queries sent or achieves comparable effectiveness while significantly reducing the number of communicated updates, depending on the specific effect-aware approach employed. Additionally, effect-aware scheduling further boosts effectiveness in scenarios with stringent cost constraints. In terms of scalability, the model-free solutions outperform the model-based approach, delivering superior performance in larger and more complex scenarios.
\end{itemize}

\textit{Notations:} $\mathbb{R}$, $\mathbb{R}_0^+$, and $\mathbb{N}$ represent the sets of real, non-negative real, and natural numbers, respectively. $\mathbf{1}_x$ denotes an all-ones column vector of size $x$. $\mathbb{E}[\cdot]$ is the expectation operator, $\lvert\cdot \rvert$ indicates the absolute value operator, $\lceil\cdot\rceil$ is the ceiling operator, and $\mathcal{O}(\cdot)$ describes a function's growth rate. 

\section{System Model}\label{sec2}
We consider a time-slotted end-to-end status update system in which $N$ \emph{sensing agents} (SAs) observe a time-variant source and transmit noisy observations in the form of update chunks to a \emph{hub} over a shared medium (see Fig.~\ref{Fig:sys_mod}). The source is characterized by $M$ attributes, where the $m$-th, for $m=1,\dots, M$, attribute is denoted by $x_m(t)$ at time slot $t\in \mathbb{N}$. Each attribute is modeled as an independent and identically distributed (i.i.d.) random variable, where the $m$-th attribute takes values from a finite set $\mathcal{X}_m=\{i~\lvert~i=1,\dots,\lvert\mathcal{X}_m\rvert\}$ with the $i$-th element occurring with probability $P_m(x_i)$ and $P_m(\cdot)$ representing the probability mass function (pmf) of that attribute.
Upon receiving a query regarding the $m$-th attribute from the hub,
the $n$-th SA, where $n=1, \dots, N$, observes that attribute, generates and sends an observation denoted as $\hat{x}_{nm}(t)\in\mathcal{X}_m$. The observation is correct, i.e., $\hat{x}_{nm}(t)=x_m(t)$, with a probability of $p_{{\rm{o}},nm}\in [0,1)$.
\begin{figure}[t!]
    \centering
    \includegraphics[width=0.35\textwidth]{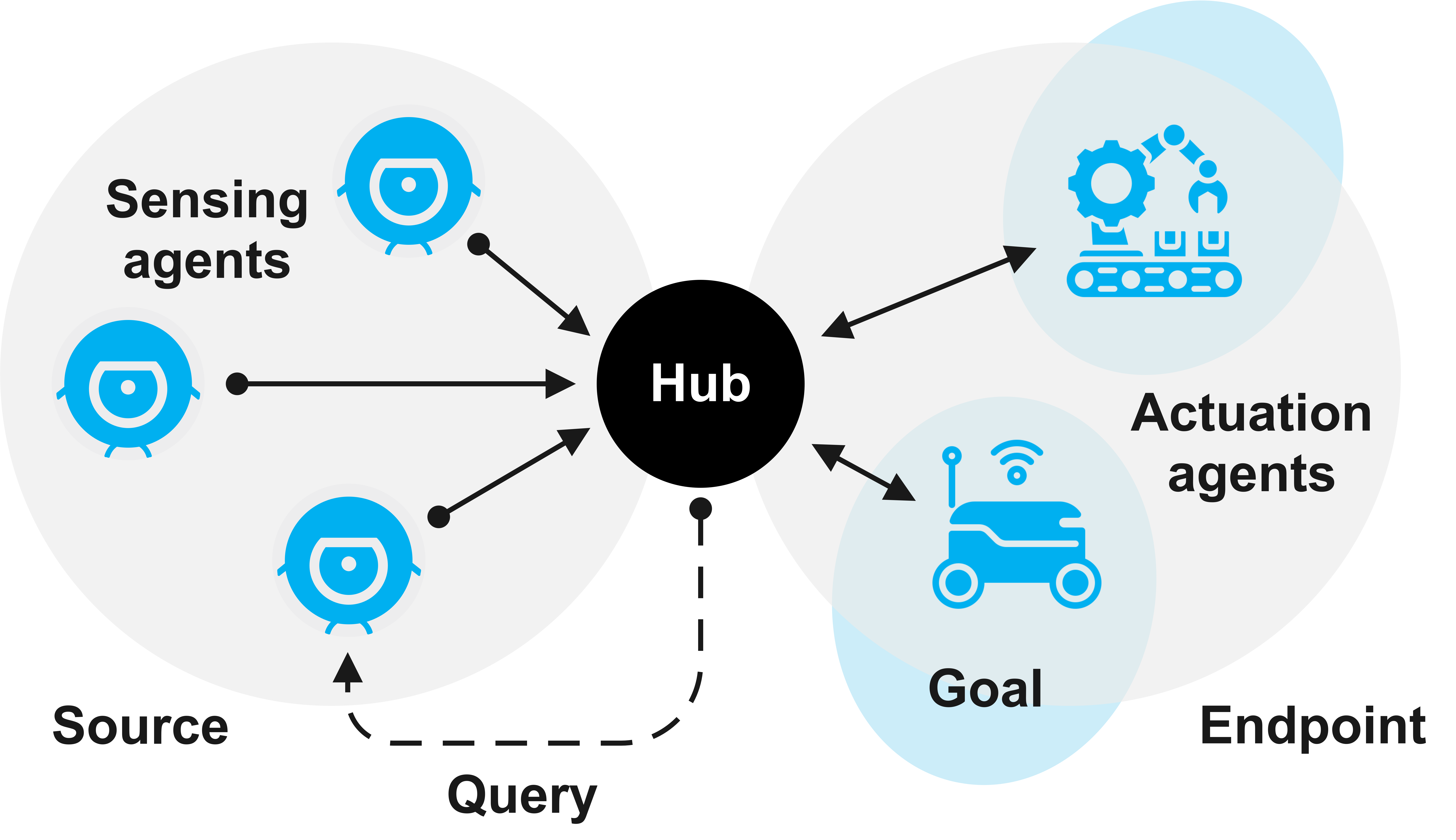}
    \caption{End-to-end update communication to satisfy heterogeneous goals.}
    \label{Fig:sys_mod}
\end{figure}

At the endpoint, $K$ \emph{actuation agents} (AAs) operate to accomplish their heterogeneous goals. To do so, the AAs require access to a \emph{knowledge base}, represented as $\mathbf{y}(t) = [y_1(t), \dots, y_M(t)]^T$, which is provided and broadcast by the hub during the $t$-th time slot. The $m$-th element of the vector, $y_m(t)\in \mathcal{X}_m$, contains the latest update on the $m$-th attribute. However, the $k$-th AA may require only a subset of attributes, denoted by $\mathcal{M}_k$, where $\lvert\mathcal{M}_k\rvert\leq M$ for $k=1,\dots, K$, to perform its actions. In this framework, the update (communication) channels between the hub and the AAs, as well as the query channels, are assumed to be error-free and zero-latency.\footnote{These assumptions enable us to focus on query scheduling without being overshadowed by secondary effects. They are justified in scenarios with high-speed, reliable downlink connections, where errors and delays are negligible.} Nevertheless, update packets (or packet chunks) sent from the $n$-th SA to the hub are subject to erasure with a probability $p_{{\rm{e}}, n}\in (0,1]$. 

\subsection{Scheduling Queries}\label{sec2:partA}
The hub schedules queries about attributes to update $\mathbf{y}(t)$ in response to the needs of the AAs. In each time slot, the hub selects at most \emph{one}\footnote{This condition simplifies the presentation of the core idea, ensuring clarity and focus while minimizing unnecessary complexity.} attribute and queries an SA about that specific attribute. However, the hub operates without full knowledge of the evolution of the source. The query indicator for the $m$-th attribute from the $n$-th SA at the $t$-th slot is denoted by $\alpha_{nm}(t) = \{0, 1\}$, where $\alpha_{nm}(t) = 1$ indicates the query arrival, and $\alpha_{nm}(t) = 0$ otherwise. Based on this, we can define $\mathbf{A}(t)=[\mathbf{a}_{1}(t),\dots,\mathbf{a}_{M}(t)]$ as the $N\times M$ query matrix, with each column $\mathbf{a}_m(t)=[\alpha_{1m}(t),\dots,\alpha_{Nm}(t)]^T$ representing an $N\times1$ query vector corresponding on the $m$-th attribute. Therefore, the $m$-th element of $\mathbf{y}(t)$ is obtained as follows:
\begin{align}
    y_m(t) = 
    \begin{cases}
         y_m(t-1), ~\mathbf{a}^T_m(t) (\mathbf{1}_N - \mathbf{e}(t)) = 0;\\
         \mathbf{a}^T_m(t)\mathbf{\hat{x}}_m(t), ~\text{otherwise},
    \end{cases}
\end{align}
where $y_m(0)=0, \forall m$, $\mathbf{\hat{x}}_m(t)=[\hat{x}_{1m}(t),\dots, \hat{x}_{Nm}(t)]^T$, and $\mathbf{e}(t)=[e_{1}(t),\dots, e_{N}(t)]^T$ being an $N\times 1$ error matrix in which $e_{n}(t)\in\{0, 1\}$ is the error indicator for the update sent from the $n$-th SA. Here, $e_{n}(t)=1$ indicates that an error occurs due to channel erasure; otherwise, $e_{n}(t)=0$.

For scheduling, every SA, whether existing or newly joined in the network, say the $n$-th one, shares its observation vector $\mathbf{p}_{{\rm{o}}, n}=[p_{{\rm{o}},n1},\dots,p_{{\rm{o}}, nM}]^T \in [0,1]^{M\times 1}$ with the hub during designated slots, while each element is subject to erasure with a probability $p_{{\rm{e}}, n}$. Additionally, we consider that scheduling, query arrival, generation of updates, and subsequent transmission occur within a \emph{single} slot.

\subsection{Grade of Effectiveness Definition}
The \emph{grade of effectiveness} (GoE) of an update on the $m$-th attribute is denoted by $\operatorname{GoE}_m(t)\in\mathbb{R}_0^+$ and is modeled as a composite function $f:\mathbb{R}_0^{+} \times\mathbb{R}_0^{+}\rightarrow \mathbb{R}_0^+$ of that update's freshness and usefulness at the $t$-th slot. Inspired from \cite[Section~III]{agheli2023semantic} and \cite[Section~III-A]{agheli2024integrated}, we can write
\begin{equation}\label{eq:GoE}
    \operatorname{GoE}_m(t) = f\big(f_\Delta(\Delta_m(t)), f_u(u_m(t))\big)
\end{equation}
where $f_\Delta : \mathbb{N} \rightarrow \mathbb{R}_0^+$ and $f_u : \mathbb{R}_0^+ \rightarrow \mathbb{R}_0^+$ denote non-increasing penalty and non-decreasing utility functions, respectively. Also, $\Delta_m(t)\in\mathbb{N}$ indicates the AoI for the $m$-th attribute at time slot $t$ that quantifies the freshness of information and is defined as follows: 
\begin{equation}
    \Delta_{m}(t) = t - \operatorname{max}\left\{t^\prime~|~t^\prime \leq t, \mathbf{a}^T_m(t^\prime)\mathbf{\hat{x}}_m(t^\prime) = x_m(t^\prime)\right\}
\end{equation}
where $\Delta_m(0)=1, \forall m$.

Moreover, $u_m(t)$ represents the usefulness of the $m$-th attribute, which takes a value from a finite set $\mathcal{U}=\{\nu_j~\lvert~j=1,\dots,\lvert\mathcal{U}\rvert\}$. Without loss of generality, we assume that the hub utilizes a surjective mapping function $g_m:\mathbb{N}\times \mathcal{X}_m \rightarrow \mathcal{U}$ for the $m$-th attribute, which maps the content of an update for the attribute to its overall usefulness. The mapping accounts for the time-variant goal requirements and the (potentially non-uniform) importance weights assigned by the AAs. Thus, we can define $u_m(t) = g_m(t; y_m(t))$ subject to $y_m(t)=x_m(t)$. If $y_m(t)\neq x_m(t)$, the update is not useful, i.e., $u_m(t)=0$. We assume $u_m(0) =  g_m(0; 0) = \nu_1, \forall m$. Given $P_{\nu, m}(\cdot)$ as the pmf over $\mathcal{U}$ for the $m$-th attribute, $u_m(t)$ could be modeled as an i.i.d. random variable where $u_m(t)=\nu_j, \forall \nu_j\in \mathcal{U}$, with the probability of $P_{\nu, m}(\nu_j)$ given by
\begin{equation}
    P_{\nu, m}(\nu_j)=\sum_{i:\,x_i=g_m^{-1}(t;\nu_j)} P_m(x_i)
\end{equation}
at the $t$-th time slot, where $P_m(x_i)$ is the occurrence probability of the $i$-th element of $\mathcal{X}_m$. 

\section{Query Scheduling Problem}
Through this section, we begin by formulating a scheduling problem for querying attribute updates. Then, we recast this problem as a \emph{constrained Markov decision problem} (CMDP) to be solved. 

Before formulating the problem, we introduce a total GoE for the combined set of all required attributes as follows:
\begin{align}\label{eq:AvgGoE}
    \operatorname{GoE}(t) =
    \sum_{m \in \bigcup_{k=1}^{K} \mathcal{M}_k} \operatorname{GoE}_m(t)
\end{align}
at the $t$-th slot, given $\mathcal{M}_k$ being the subset of the attributes required by the $k$-th AA, where $k=1,\dots,K$. 

\subsection{Problem Formulation}\label{sec3:partA}
The objective is to maximize the expected discounted sum of the CPT-based total GoE by optimally scheduling queries while ensuring that the expected discounted cumulative query cost does not exceed a cost constraint $C_{\rm max}$. A class of \emph{effect-aware} query scheduling policies, notated as $\pi^*$, is derived by solving the following problem:
\begin{align}\label{eq:opt-I}
    \mathcal{P} :~ &\underset{\pi}{\operatorname{max}} ~~ \mathbb{E}_{\pi}\!\left[\,\sum_{t = 0}^{\infty} \gamma^t v_{\rm cpt} \big(\!\operatorname{GoE}(t)\big) \,\Big|\operatorname{GoE}(0), \pi \right] \nonumber \\
    & {\rm s.t.} ~~ \mathbb{E}_{\pi}\!\left[\,\sum_{t = 0}^{\infty} \gamma^t v^+_{\rm cpt}\big(f_c(\mathbf{1}_N^T \mathbf{A}(t) \mathbf{1}_M)\big) \,\Big|\,\pi \right] \leq C_{\rm max}, \nonumber \\
    &~~~~~\,  \mathbf{1}_N^T \mathbf{A}(t) \mathbf{1}_M \leq 1, \forall t,
\end{align}
where the expectations are taken over CPT-weighted probabilities, and $\gamma \in [0,1)$ is a discount factor. Also, $f_c:\{0,1\} \rightarrow \mathbb{R}_0^+$ is a non-increasing function to quantify the induced query cost.

The function $v_{\rm cpt}: \mathbb{R}_0^+\rightarrow \mathbb{R}$ represents the standard two-part CPT-based value function \cite{tversky1992advances}, which is concave for gains, convex for losses, and exhibits greater sensitivity to losses than to gains, reflecting loss aversion. The value function for gains (losses) is denoted by $v^+_{\rm cpt}(\cdot)$ ($v^-_{\rm cpt}(\cdot)$) and is monotonically non-decreasing (non-increasing) under positive (negative) prospects only. Otherwise, it evaluates to zero. Let $\operatorname{GoE}_{\rm ref} \in \mathbb{R}_0^+$ denote the \emph{reference point} for the total GoE. In this context, any instance where $\operatorname{GoE}(t) \geq \operatorname{GoE}_{\rm ref}, \forall t$, is considered a gain, while $\operatorname{GoE}(t) < \operatorname{GoE}_{\rm ref}$ refers to losses. For the induced query cost, we set the reference point to zero, making the prospect associated with the cost strictly positive.

\subsection{CMDP Modeling}\label{sec3:partB}
We can transform $\mathcal{P}$ into an \emph{infinite-horizon} CMDP problem by defining its components in the following.

\textbf{States.} The state space of the process is denoted by $\mathcal{S}$. The state at the $t$-th time slot, identified as $s(t)\in\mathcal{S}$, is represented as a tuple containing the AoI and the usefulness of updates for all attributes required by the AAs, as below:
\begin{equation*}
    s(t) = \big(\Delta_1(t), \dots, \Delta_{\bigcup_{k=1}^{K} \mathcal{M}_k}(t), u_1(t), \dots, u_{\bigcup_{k=1}^{K} \mathcal{M}_k}(t)\big).
\end{equation*}
Without loss of generality, we assume that the maximum acceptable AoI is $\Delta_{\rm max}$, i.e., $\Delta_m(t)\leq\Delta_{\rm max}, \forall m,t$. Given this, the size of the state space, denoted as $\lvert \mathcal{S}\rvert$, is determined as follows:
\begin{equation*}
    \lvert \mathcal{S}\rvert = \big\lvert\Delta_{\rm max} \lvert\mathcal{U}\rvert \big\rvert^ {\left\lvert\bigcup_{k=1}^{K} \mathcal{M}_k\right\rvert}.
\end{equation*}

\textbf{Actions.}
At the $t$-th slot, the action is denoted by $a(t)$, which is taken from the action space $\mathcal{A}=\{0\}\cup\bigcup_{k=1}^{K} \mathcal{M}_k$ having the size of $|\mathcal{A}| = \lvert \bigcup_{k=1}^{K} \mathcal{M}_k\rvert + 1$. 
In this definition, $a(t)=m, \forall m\neq0$, indicates querying the $m$-th attribute, i.e., $\mathbf{1}_N^T \mathbf{a}_m(t)=1$ (see Section~\ref{sec2:partA}), and $a(t)=0$ indicates that \emph{no} query is made during the $t$-th slot, i.e., $\mathbf{1}_N^T \mathbf{A}(t) \mathbf{1}_M=0$. Once the $m$-th attribute is selected for querying, the specific $n_m$-th SA to be queried about that attribute is determined as
\begin{equation}\label{eq:SelSA}
    n_m = \underset{n=1,\dots,N}{\operatorname{arg\,max}}~  (1 - p_{{\rm{e}}, n}) p_{{\rm{o}},nm}
\end{equation}
where $p_{{\rm{e}}, n}$ is the channel erasure probability between the $n$-th SA and the hub, and $p_{{\rm{o}},nm}$ is the likelihood of the $n$-th SA observing the $m$-th attribute correctly (see Section~\ref{sec2}). Applying \eqref{eq:SelSA}, we reach $\alpha_{n_mm}(t)=1, \forall m\in\mathcal{A}\setminus\{0\}$, for the $m$-th attribute that is selected during the $t$-th slot. Otherwise, we have $\alpha_{nm}(t)=0, \forall n,m$, satisfying $\mathbf{1}_N^T \mathbf{A}(t) \mathbf{1}_M \leq 1$.


\textbf{Transition probabilities.} 
The transition probability of reaching state $s(t+1)$ at the $t+1$-th slot from state $s(t)$ by taking action $a(t)$ is denoted as $p(s(t+1)|s(t), a(t))$. Hence, these transition probabilities are derived as follows:
\begin{itemize}
\renewcommand{\labelitemi}{\scriptsize\(\blacksquare\)}
    \item If $s(t+1)$ shows a successful update of the $m$-th attribute, i.e., $\Delta_m(t+1)=1$ and $u_m(t+1)=\nu_j, \forall \nu_j\in\mathcal{U}, m\in\bigcup_{k=1}^{K} \mathcal{M}_k$, whereas $\Delta_{m^\prime}(t+1)=\Delta_{m^\prime}(t)+1$ and $u_{m^\prime}(t+1)=u_{m^\prime}(t), \forall {m^\prime}\neq m$, we have
    \begin{align*}
        &p(s(t+1)|s(t), a(t)=m) = \\ 
        &~~~~~~~~~~P_{\nu, m}(\nu_j)(1 - p_{{\rm{e}}, n_m}) p_{{\rm{o}},n_mm}.
    \end{align*}
    \item Else if $\Delta_m(t+1)=\Delta_m(t)+1$ and $u_m(t+1)=u_m(t), \forall m\in\bigcup_{k=1}^{K} \mathcal{M}_k$, we reach 
    \begin{align*}
        &p(s(t+1)|s(t), a(t)=m) = \\
        &~~~~~~~~~~(1 - p_{{\rm{o}},n_mm}) +  p_{{\rm{e}}, n_m} p_{{\rm{o}},n_mm},
    \end{align*}
and $p(s(t+1)|s(t), a(t)=0) = 1$. 
\end{itemize}
All other probabilities are equal to zero.


Letting $w_{\rm cpt}:\{0, 1\}\rightarrow\{0, 1\}$ represent the CPT-based weighting function \cite{tversky1992advances}, the CPT-weighted transition probability is given by $w_{\rm cpt}(p(s(t+1)|s(t), a(t)))$. The weighting function is continuous, non-decreasing, and typically exhibits an inverse S-shaped curve. This means that it is concave for low probabilities and convex for high probabilities, with boundary conditions $w_{\rm cpt}(0)=0$ and $w_{\rm cpt}(1)=1$.

\textbf{Rewards.}
The immediate reward for transitioning from state $s(t)$ to state $s(t+1)$ upon taking action $a(t)$ is defined as $r(s(t), a(t), s(t+1)) = v_{\rm cpt}(\operatorname{GoE}(t+1))$.

\begin{proposition}\label{prop:prop1}
The modeled CMDP satisfies the weak accessibility condition.
\end{proposition}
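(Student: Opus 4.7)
The plan is to exhibit a partition $\mathcal{S} = \mathcal{S}_c \cup \mathcal{S}_t$ of the state space such that every pair of states in $\mathcal{S}_c$ communicates under some stationary policy, while every state in $\mathcal{S}_t$ is transient under every stationary policy. Writing $\bar{M} = \lvert \bigcup_{k=1}^{K} \mathcal{M}_k \rvert$ and a generic state as $s = (\Delta_1,\ldots,\Delta_{\bar{M}}, u_1,\ldots,u_{\bar{M}})$, I would define $\mathcal{S}_t$ as the set of states whose AoI profile has two coordinates $\Delta_i(s) = \Delta_j(s) = d < \Delta_{\max}$ with $i \neq j$, and take $\mathcal{S}_c = \mathcal{S} \setminus \mathcal{S}_t$. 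Transience of $\mathcal{S}_t$ then follows by a direct impossibility argument: returning to such an $s$ at some future time $T$ would require the last successful queries of attributes $i$ and $j$ to occur in the same slot $T - d + 1$, which the per-slot action constraint $\mathbf{1}_N^T \mathbf{A}(t)\mathbf{1}_M \leq 1$ (baked into the definition of $\mathcal{A}$ in Section~\ref{sec3:partB}) forbids; hence $s$ is visited at most once with probability one under every stationary policy.

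For mutual accessibility within $\mathcal{S}_c$, I would work with the fully randomized stationary policy $\pi$ that assigns strictly positive mass to every action in $\mathcal{A}$ at every state. Given $s, s' \in \mathcal{S}_c$ and a terminal time $T$ taken large enough, I would construct a two-stage action sequence whose every step has positive probability under $\pi$. A preparation stage uses slots $t \leq T - \Delta_{\max}$ to perform one successful query with outcome $u_m(s')$ for each attribute $m$ whose target AoI $\Delta_m(s')$ equals $\Delta_{\max}$ but whose current usefulness differs from the target; by time $T$ the AoI saturates back to $\Delta_{\max}$ while the installed usefulness is preserved. A placement stage uses slot $T - \Delta_m(s') + 1$ for every attribute $m$ with $\Delta_m(s') < \Delta_{\max}$ to install $u_m(s')$; admissibility of $s'$ guarantees that the values $\{\Delta_m(s') : \Delta_m(s') < \Delta_{\max}\}$ are pairwise distinct, so the placement slots do not collide and are disjoint from the preparation slots. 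The ``no query'' action fills the remaining slots. Each scheduled query succeeds with the designated outcome with probability $(1-p_{{\rm{e}}, n_m})\,p_{{\rm{o}}, n_m m}\,P_{\nu,m}(u_m(s')) > 0$ by \eqref{eq:SelSA} and the model assumptions, and $\pi$ puts positive mass on every action, so the entire trajectory has positive probability; thus $s \to s'$ is feasible and, by symmetry, $s$ and $s'$ communicate.

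The main obstacle is the second step: one must carefully check that the two stages can be scheduled in disjoint slots without inadvertently overwriting AoI or usefulness coordinates that have already been committed, and that a single choice of $T$ suffices to accommodate both stages while respecting the AoI cap $\Delta_{\max}$. The transience half is the shorter of the two and reduces to a pigeonhole-style contradiction using the one-query-per-slot structure of $\mathcal{A}$. With both pieces in place, the WA condition follows directly from its definition.
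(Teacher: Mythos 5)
Your proposal is correct and establishes weak accessibility (WA) by a genuinely different, and in fact more fine-grained, decomposition than the paper's. The paper splits $\mathcal{S}$ into $M+1$ subsets $\mathcal{S}_0,\mathcal{S}_1,\dots,\mathcal{S}_M$, where each $\mathcal{S}_m$ ($m\geq 1$) collects the states with $\Delta_m(t)<\Delta_{\rm max}$ and all other AoIs saturated at $\Delta_{\rm max}$, argues that $\mathcal{S}_m$ is recurrent under the persistent policy $\mathbf{1}_N^T\mathbf{a}_m(t)=1$, and declares the remainder $\mathcal{S}_0$ transient ``under some policy,'' concluding WA from reachability of recurrent states. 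Your two-set partition hews closer to the letter of the definition \cite[Definition~4.2.2]{bertsekas2007volume}: condition (a) demands transience under \emph{every} stationary policy, which your pigeonhole argument delivers structurally---a state with two equal AoI coordinates below $\Delta_{\rm max}$ can never be re-entered, since the per-slot constraint $\mathbf{1}_N^T \mathbf{A}(t) \mathbf{1}_M \leq 1$ in \eqref{eq:opt-I} forbids two simultaneous successful updates---whereas the paper's $\mathcal{S}_0$ contains states such as $(\Delta_i,\Delta_j)=(1,2)$ with all other ages saturated, which \emph{are} re-enterable (hence not transient) under suitable stationary policies; your construction moves precisely those states into the communicating class, where they belong. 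Your condition-(b) argument (a fully randomized stationary policy plus the explicit preparation/placement trajectory, with disjointness of slots guaranteed by the pairwise-distinct sub-$\Delta_{\rm max}$ ages of any admissible target) is likewise more explicit than the paper's one-line accessibility claim. Two minor repairs: with the paper's convention that a success at slot $t$ yields $\Delta_m(t+1)=1$, the placement query for target age $\Delta_m(s')$ belongs at slot $T-\Delta_m(s')$, not $T-\Delta_m(s')+1$; and positivity of $(1-p_{{\rm e},n_m})\,p_{{\rm o},n_m m}$ and of $P_{\nu,m}(u_m(s'))$ does not follow from the stated ranges $p_{{\rm o},nm}\in[0,1)$ and $p_{{\rm e},n}\in(0,1]$ (either factor may vanish), so it should be stated as a mild non-degeneracy assumption---one the paper's own recurrence claim for $\mathcal{S}_m$ implicitly requires as well.
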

\begin{proof}
We can decompose the state space into $M+1$ disjoint subsets such that $\mathcal{S}=\bigcup_{m=0}^{M}\mathcal{S}_m$. For each $m=1, \dots, M$, the subset $\mathcal{S}_m$ corresponds to the set of states where $\Delta_m(t) < \Delta_{\rm max}$, while $\Delta_{m^\prime}(t)=\Delta_{\rm max}, \forall {m^\prime}\neq m$, and $u_{m^\prime}(t)$ remains constant within this subset. Under the policy $\mathbf{1}_N^T \mathbf{a}_m(t)=1$, the subset $\mathcal{S}_m$ indicates the set of \emph{recurrent} states. The subset $\mathcal{S}_0$ consists of the remaining states, all of which are \emph{transient} under some policy. Therefore, the weak accessibility condition holds \cite[Definition~4.2.2]{bertsekas2007volume}, as the process can reach any recurrent state from a transient one with a proper policy.
\end{proof}
\begin{corollary}\label{cor:cor_1}
With the weak accessibility condition satisfied for the defined CMDP model, the expected discounted sum of the CPT-based total GoE in $\mathcal{P}$ remains invariant across all initial states \cite[Proposition~4.2.3]{bertsekas2007volume}. Consequently, $v_{\rm cpt} \big(\!\operatorname{GoE}(t)\big)$, is independent of $\operatorname{GoE}(0)$, for all $t\geq 1$.
\end{corollary}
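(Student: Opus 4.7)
The plan is to invoke the classical result on weakly accessible MDPs from \cite[Proposition~4.2.3]{bertsekas2007volume}, using Proposition~\ref{prop:prop1} as the hypothesis, and then argue that the CPT-based modifications of transition weights and rewards preserve the structural features that the Bertsekas result requires. First, I would recall from Proposition~\ref{prop:prop1} that the state space admits a decomposition $\mathcal{S}=\bigcup_{m=0}^M \mathcal{S}_m$, with $\mathcal{S}_0$ being purely transient and each $\mathcal{S}_m$, $m\geq 1$, being a communicating recurrent class reachable from any state of $\mathcal{S}_0$ under a suitable stationary policy. This is exactly the weak accessibility structure required for the invariance argument to apply.

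Second, I would observe that the reformulation with the CPT value function $v_{\rm cpt}(\cdot)$ and weighting function $w_{\rm cpt}(\cdot)$ does not alter the underlying graph of reachability between states: $w_{\rm cpt}(p)=0$ iff $p=0$, so the CPT-weighted transition kernel has the same support as the nominal kernel $p(s(t+1)|s(t),a(t))$. Consequently, the communicating and transient classes are identical under both kernels, and weak accessibility carries over to the CPT-reweighted CMDP. The reward $r(s(t),a(t),s(t+1))=v_{\rm cpt}(\operatorname{GoE}(t+1))$ is bounded (since $\operatorname{GoE}(t)$ is bounded and $v_{\rm cpt}$ is continuous on a bounded domain), which is the remaining regularity condition needed by Bertsekas' invariance result.

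Third, having satisfied the hypotheses, I would directly apply \cite[Proposition~4.2.3]{bertsekas2007volume} to conclude that the optimal expected discounted sum in $\mathcal{P}$ does not depend on the initial state $s(0)$. Since $\operatorname{GoE}(0)$ is a deterministic function of $s(0)$ (through the initialization $y_m(0)=0$, $\Delta_m(0)=1$, $u_m(0)=\nu_1$), invariance across initial states implies invariance across values of $\operatorname{GoE}(0)$. Finally, for any $t\geq 1$, the distribution of $s(t)$ under an optimal stationary policy becomes independent of $s(0)$ because the process mixes within the recurrent class, so $v_{\rm cpt}(\operatorname{GoE}(t))$ is independent of $\operatorname{GoE}(0)$ as stated.

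The main obstacle I anticipate is the second step: justifying that Bertsekas' theorem, proved for standard MDPs with probabilistic transition kernels and additive expected rewards, transfers cleanly to the CPT-reweighted setting. The safest route is to treat the CPT-weighted kernel $w_{\rm cpt}\circ p$ (after the usual normalization so that it is a valid sub-stochastic or stochastic kernel) as the effective transition law of an auxiliary MDP, verify that the support-preservation property yields the same weak accessibility decomposition, and only then invoke Proposition~4.2.3. If the CPT weights do not normalize to a probability measure, the argument must be recast in terms of Choquet-type expectations, but the invariance conclusion still follows because it depends only on the reachability structure, not on the precise weighting.
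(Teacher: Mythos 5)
Your first three steps reproduce the paper's own argument: the paper proves weak accessibility in Proposition~\ref{prop:prop1} via the decomposition $\mathcal{S}=\bigcup_{m=0}^{M}\mathcal{S}_m$ into one transient subset and $M$ recurrent subsets, and then obtains the corollary by directly invoking \cite[Proposition~4.2.3]{bertsekas2007volume}; it offers no further proof. Your added checks---that $w_{\rm cpt}$ preserves the support of the transition kernel (so the communicating/transient structure is unchanged) and that the rewards are bounded---are sensible diligence that the paper leaves implicit, and your caveat about the CPT-weighted kernel possibly failing to normalize to a probability measure is a legitimate concern the paper sidesteps (in its simulations it sets $w_{\rm cpt}(x)=x$, making the issue moot there). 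Note also that both you and the paper apply an average-cost invariance result to a discounted objective; that looseness is shared, so I do not count it against you.

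The genuine gap is your final step. You justify ``$v_{\rm cpt}(\operatorname{GoE}(t))$ is independent of $\operatorname{GoE}(0)$ for all $t\geq 1$'' by asserting that the distribution of $s(t)$ under an optimal stationary policy becomes independent of $s(0)$ ``because the process mixes within the recurrent class.'' Mixing is an asymptotic property: it yields convergence of the law of $s(t)$ to the stationary distribution as $t\to\infty$, not equality of laws at each finite $t$. At $t=1$ the law of $s(1)$ is exactly the one-step kernel $p(\cdot\,|\,s(0),a(0))$, which manifestly depends on $s(0)$, and the same dependence persists at every finite horizon before stationarity is reached; moreover, which recurrent subset $\mathcal{S}_m$ is relevant depends on the policy, so even the limiting law is not policy-free. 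What the cited invariance result actually delivers---and all that the paper uses downstream---is that the \emph{optimal value} of the objective in $\mathcal{P}$ is the same for every initial state, which is precisely what licenses dropping the conditioning on $\operatorname{GoE}(0)$ when forming the Lagrangian in \eqref{eq:lag}. The corollary's final sentence must be read in that value-function sense; your distributional reading is strictly stronger than the statement needs, and the mixing argument cannot establish it for finite $t$.
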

\begin{corollary}\label{cor:cor_2}
Satisfying the weak accessibility condition guarantees the existence of a class of stationary optimal policies, denoted as $\pi^*$, for $\mathcal{P}$ in \eqref{eq:opt-I}, which are unichain \cite[Proposition~4.2.6]{bertsekas2007volume}. 
\end{corollary}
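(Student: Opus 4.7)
The plan is to treat Corollary~\ref{cor:cor_2} as the translation of \cite[Proposition~4.2.6]{bertsekas2007volume} into the present CPT-based CMDP setting, so the work is essentially in checking that the hypotheses of that proposition actually hold here, and then managing the constraint in \eqref{eq:opt-I}. The deterministic ingredients (finite $\mathcal{S}$ and $\mathcal{A}$ from Section~\ref{sec3:partB}, bounded immediate reward $v_{\rm cpt}(\operatorname{GoE}(t))$ since $\operatorname{GoE}(t)$ is bounded by $|\bigcup_k\mathcal{M}_k|\cdot\max_m \operatorname{GoE}_m(t)$ on the finite state space, and the discount $\gamma\in[0,1)$) are immediate. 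The weak accessibility condition is exactly what Proposition~\ref{prop:prop1} just established. Hence the non-trivial items to address are (i) that the CPT weighting does not destroy the Markov chain structure required by Bertsekas's proposition, and (ii) that the cost constraint can be absorbed without breaking unichain-ness.

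For (i), I would observe that although $w_{\rm cpt}(\cdot)$ distorts the probabilities used inside the CPT expectation, a \emph{stationary} policy $\pi$ still induces a genuine time-homogeneous Markov chain on $\mathcal{S}$ with the physical transition kernel $p(s'|s,\pi(s))$ derived in Section~\ref{sec3:partB}; the CPT weighting enters only through how future rewards are aggregated, not through the dynamics on which recurrence/transience of states depends. Because $w_{\rm cpt}$ is continuous, strictly monotone on $[0,1]$, and fixes the endpoints $\{0,1\}$, zero-probability and one-probability transitions are preserved, so the classification of states into recurrent and transient classes under any stationary policy is identical in the CPT-weighted and unweighted models. Weak accessibility therefore transfers verbatim, and the hypotheses of \cite[Proposition~4.2.6]{bertsekas2007volume} are satisfied by the induced chain for every stationary $\pi$.

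For (ii), I would use the standard Lagrangian reduction for finite discounted CMDPs: define, for $\lambda\geq 0$,
\begin{equation*}
    \mathcal{L}(\pi,\lambda) = \mathbb{E}_\pi\!\left[\sum_{t=0}^{\infty}\gamma^t\!\left(v_{\rm cpt}(\operatorname{GoE}(t)) - \lambda\, v^+_{\rm cpt}(f_c(\mathbf{1}_N^T\mathbf{A}(t)\mathbf{1}_M))\right)\right] + \lambda C_{\rm max},
\end{equation*}
and consider the unconstrained MDP $\mathcal{P}_\lambda$ whose per-step reward is the bracketed term. Because $\mathcal{P}_\lambda$ has finite state and action spaces, bounded rewards, discount $\gamma<1$, and inherits weak accessibility from Proposition~\ref{prop:prop1}, \cite[Proposition~4.2.6]{bertsekas2007volume} yields a stationary optimal policy $\pi^*_\lambda$ whose induced chain is unichain. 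Strong duality for finite discounted CMDPs (via the linear-programming formulation over occupation measures, which remains linear even with the CPT-reshaped reward since it only affects coefficients) guarantees the existence of $\lambda^*\geq 0$ such that $\pi^*_{\lambda^*}$ is feasible for the constraint in \eqref{eq:opt-I} and optimal for $\mathcal{P}$. Setting $\pi^*=\pi^*_{\lambda^*}$ then gives the claimed class of stationary, unichain optimal policies.

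The main obstacle I expect is item (i): carefully separating the role of $w_{\rm cpt}$ as a reweighting of \emph{values} from its role as an ostensible modifier of \emph{probabilities}, so that the Markov chain classification arguments underlying Bertsekas's proposition apply unchanged. Once that separation is made rigorous, the rest is a straightforward application of finite CMDP duality together with the cited proposition.
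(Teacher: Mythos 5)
Your proposal is correct in substance, and its core coincides exactly with the paper's: the paper offers no proof of Corollary~\ref{cor:cor_2} beyond the chain ``Proposition~\ref{prop:prop1} gives weak accessibility, hence \cite[Proposition~4.2.6]{bertsekas2007volume} yields stationary optimal policies that are unichain,'' which is precisely your skeleton. Where you genuinely diverge is in making explicit two things the paper leaves silent. First, your item (i) --- that under a stationary policy the induced chain retains the physical kernel $p(s^\prime\,|\,s,\pi(s))$, with $w_{\rm cpt}$ entering only through reward aggregation, and that monotonicity together with $w_{\rm cpt}(0)=0$, $w_{\rm cpt}(1)=1$ preserves the recurrent/transient classification --- appears nowhere in the paper, yet it is the right observation for legitimizing the citation, since the paper's own Bellman recursion \eqref{eq:Q-func} reweights per-transition probabilities while Bertsekas's classification arguments are stated for genuine Markov chains; you are more careful than the source here. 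Second, your Lagrangian reduction is not part of the paper's proof of this corollary at all: the paper applies \cite[Proposition~4.2.6]{bertsekas2007volume} directly to the constrained problem and defers all duality to Section~\ref{sec4:partA}, where the saddle point is justified via \cite[Corollary~12.2]{altman1999constrained} and Slater's condition \cite[Theorem~12.8]{altman1999constrained} rather than your occupation-measure LP; your route is more self-contained, the paper's shorter but leaning harder on citations. One caution on your step (ii): as written, ``strong duality guarantees $\lambda^*$ such that $\pi^*_{\lambda^*}$ is feasible and optimal'' overclaims if $\pi^*_{\lambda^*}$ is read as deterministic --- in general the deterministic Lagrangian-optimal policies at $\lambda^*$ are either infeasible or strictly suboptimal for the constrained problem, which is why the paper's Algorithm~\ref{alg_1} terminates by mixing $\pi^-$ and $\pi^+$ with probability $\eta$; your occupation-measure phrasing absorbs this, since convex combinations of optimal occupation measures remain optimal, but the conclusion should be stated as delivering a \emph{randomized} stationary policy, which is still consistent with the corollary's ``class of stationary optimal policies.'' Finally, note (as a shared looseness rather than a gap in your argument) that both you and the paper invoke an average-cost proposition for a discounted objective with $\gamma\in[0,1)$; for finite discounted MDPs stationary optimal policies exist unconditionally, so weak accessibility is really doing its work for the initial-state invariance of Corollary~\ref{cor:cor_1} and the unichain property, not for bare existence.
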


\section{Model-Based Solution for Effect-Aware Scheduling}\label{sec4}
In this section, we explore the duality of the query scheduling problem formulated in Section~\ref{sec3:partA}. Subsequently, leveraging the CMDP model defined in Section~\ref{sec3:partB}, we propose a \emph{model-based} solution to derive effect-aware scheduling policies.

\subsection{Dual Problem}\label{sec4:partA}
By applying Corollary~\ref{cor:cor_1}, we incorporate the cost constraint of the scheduling problem $\mathcal{P}$ in \eqref{eq:opt-I} into its objective function and derive the Lagrangian function associated with the scheduling problem, as follows:
\begin{align}\label{eq:lag}
    &\mathcal{L}(\pi, \mu) =\nonumber \\
    &\mathbb{E}_{\pi}\!\left[\,\sum_{t = 0}^{\infty} \gamma^t \Big(v_{\rm cpt}\big(\!\operatorname{GoE}(t)\big) - \mu v^+_{\rm cpt}\big(f_c(\mathbf{1}_N^T \mathbf{A}(t) \mathbf{1}_M)\big)\Big) \,\Big|\,\pi \right] \nonumber \\
    &~~~ + \mu C_{\rm max}
\end{align}
where $\mu\geq0$ is the Lagrange multiplier. Based on the CMDP model, the second constraint of $\mathcal{P}$ is relaxed when formulating the dual problem.
Consequently, the dual problem is expressed as below:
\begin{align}\label{eq:opt-II}
    \widehat{\mathcal{P}} :~ &\underset{\mu}{\operatorname{inf}} ~\underset{\pi}{\operatorname{sup}} ~~ \mathcal{L}(\pi, \mu) \nonumber \\
    & {\rm s.t.} ~~ \mu \geq 0.
\end{align}

There exists a saddle point that ensures the convergence of the original scheduling problem, i.e., $\mathcal{P}$, and the dual one, i.e., $\widehat{\mathcal{P}}$, to the same values. This is due to \textit{(i)} the finite state space of the CMDP model, satisfying the growth condition \cite{altman1999constrained}, and \textit{(ii)} the boundness of the immediate rewards in that model. In this sense, the following equality holds \cite[Corollary~12.2]{altman1999constrained}:
\begin{align}
    \underset{\mu \geq 0}{\operatorname{inf}} ~ \underset{\pi}{\operatorname{sup}} ~ \mathcal{L}(\pi, \mu) = \underset{\mu \geq 0}{\operatorname{inf}} ~ \mathcal{L}(\pi^*, \mu) = \underset{\pi}{\operatorname{sup}} ~ \mathcal{L}(\pi, \mu^*) 
\end{align}
where $\mu^*$ denotes a non-negative optimal Lagrange multiplier for some $\pi$, which is feasible owing to \textit{(i)} and \textit{(ii)} under the Slater's condition \cite[Theorem~12.8]{altman1999constrained}. 

We can now develop an \emph{iterative algorithm} to solve the dual scheduling problem and derive the class of optimal policies.

\subsection{Iterative Algorithm}
We propose Algorithm~\ref{alg_1}, which utilizes \emph{inner} and \emph{outer} loops to iteratively compute the optimal policy $\pi^*$ and the optimal Lagrange multiplier $\mu^*$, leveraging the \emph{value iteration} method for policy optimization in the inner loop and the \emph{bisection search} method for refining $\mu^*$ in the outer loop. Given the interdependence between the policy and the Lagrange multiplier, the algorithm alternates between these loops and is executed iteratively until convergence to achieve both optimal values.
\SetKwFunction{FMain}{$\text{derive}\_\text{policy}$} 
\SetKwProg{Fn}{Function}{:}{}
\begin{algorithm}[t!]
{
\DontPrintSemicolon
    \caption{Computing $\pi^*$ and $\mu^*$} \label{alg_1}
    \KwInput{Given parameters $M$, $N$, $T\gg 0$, and $\gamma$, $C_{\rm max}$. CMDP's state space, i.e., $\mathcal{S}$, and action space, i.e., $\mathcal{A}$. 
    Shapes of $v_{\rm cpt}(\cdot)$, $w_{\rm cpt}(\cdot)$, and $f_c(\cdot)$. 
    Tolerance $\epsilon_\mu$. Mixing factor $\eta$.
    Initial values $l \leftarrow 0$, $\mu^{(0)} \leftarrow 0$, $\mu_{-}^{(0)} \leftarrow 0$, $\mu_{+}^{(0)} > 0$, $\pi^- \leftarrow 0$, and $\pi^+ \leftarrow 0$.
    }
    Initialize $\pi^*(s), \forall s \in \mathcal{S}$, via \FMain{$\mu^{(0)}$}.\\
    \lIf{$\mathbb{E}_{\pi}\!\left[\,\sum_{t = 0}^{T} \gamma^t v^+_{\rm cpt}(f_c(\mathbf{1}_N^T \mathbf{A}(t) \mathbf{1}_M))\right] \!\leq\! C_{\rm max}$}{\textbf{goto} {\scriptsize{\textbf{\ref{line:return1}}}}.}
        
    \nonl\Comment{\footnotesize{\textit{Outer loop (Bisection search)}}}
    
    \While{$\big\lvert\mu_{+}^{(l)}-\mu_{-}^{(l)}\big\rvert \geq \epsilon_\mu$}
    {
    \nonl\textit{\textbf{Step}} $l$: 
   
    set $l \leftarrow l+1$, $\mu_{-}^{(l)} \leftarrow \mu_{-}^{(l-1)}$, and $\mu_{+}^{(l)} \leftarrow \mu_{+}^{(l-1)}$.\\
    Update $\mu^{(l)} \leftarrow \frac{\mu_{-}^{(l)} + \mu_{+}^{(l)}}{2}$. \\
    Improve $\pi^* \leftarrow$ \FMain{$\mu^{(l)}$}.\\
    \If{$\mathbb{E}_{\pi}\!\left[\,\sum_{t = 0}^{T} \gamma^t v^+_{\rm cpt}(f_c(\mathbf{1}_N^T \mathbf{A}(t) \mathbf{1}_M))\right] \!\geq\! C_{\rm max}$}{$\mu_{-}^{(l)} \leftarrow \mu^{(l)}$, and $\pi^- \leftarrow$ \FMain{$\mu_{-}^{(l)}$}.}
    \lElse{$\mu_{+}^{(l)} \leftarrow \mu^{(l)}$, and $\pi^+ \leftarrow$ \FMain{$\mu_{+}^{(l)}$}.}
    
    } 
    \If{$\mathbb{E}_{\pi}\!\left[\,\sum_{t = 0}^{T} \gamma^t v^+_{\rm cpt}(f_c(\mathbf{1}_N^T \mathbf{A}(t) \mathbf{1}_M))\right] \!<\! C_{\rm max}$}{$\pi^*(s) \leftarrow \eta \pi^-(s) + (1-\eta)\pi^+(s), \forall s \in \mathcal{S}$.}
    \KwRet $\mu^* = \mu^{(l)}$ and $\pi^*(s), \forall s\in\mathcal{S}$.\label{line:return1}
    

         
    
    
  
    
  
    \nonl
    \nonl\hrulefill\\
    \nonl
    \Fn{\FMain{$\mu$}}{
    \vspace{0.07cm}
    \KwInput{Global parameters from the outer loop. Sensitivity criterion $\epsilon_\pi$. Initial values $i \leftarrow 1$, $\pi \leftarrow 0$, and $V^{(0)}_\pi \leftarrow 0$.}

    \nonl\Comment{\footnotesize{\textit{Inner loop (Value iteration)}}}
    
    \nonl\textit{\textbf{Iteration}} $i$: \\
    \For{state $s \in \mathcal{S}$\label{line:iter_t}}{compute $V^{(i)}_\pi(s)$ from \eqref{eq:V-func}--\eqref{eq:Q-func}\label{line:compute_v}.\\
    Improve $\pi(s)$ using \eqref{eq:Q-func}--\eqref{eq:opt-policy}.
    }
    \If{$\operatorname{sp}\!\big(V^{(i)}_\pi - V^{(i-1)}_\pi\big) \geq \epsilon_\pi$ as in \eqref{eq:sp-func}}{increment $i \leftarrow i+1$, and \textbf{goto} {\scriptsize{\textbf{\ref{line:iter_t}}}}.}
    \KwRet $\pi(s), \forall s\in\mathcal{S}$.
  }
}
\end{algorithm}

\subsubsection{Computing $\pi^*$}
Using the value iteration approach based on dynamic programming and bootstrapping future returns, the value function $V^{(i)}_\pi(s), \forall s\in\mathcal{S}$, within the $i$-th iteration, following the policy $\pi:\mathcal{S}\rightarrow\mathcal{A}$, is computed as follows:  
\begin{equation}\label{eq:V-func}
    V^{(i)}_\pi(s) = \underset{a \in \mathcal{A}}{\operatorname{max}} ~ Q^{(i)}_\pi(s, a)
\end{equation}
using the Bellman equation \cite{bellman1957dynamic}, where $Q^{(i)}_\pi(s, a)$ denotes the state--action value function (Q-function) defined as follows:
\begin{equation}\label{eq:Q-func}
    Q^{(i)}_\pi(s, a) = \sum_{s^\prime \in \mathcal{S}} w_{\rm cpt}(p(s^\prime|s, a))  \Big(r_\mu(s, a, s^\prime) + \gamma V^{(i-1)}_\pi(s^\prime)\Big)
\end{equation}
with $r_\mu(s, a, s^\prime) = r(s, a, s^\prime) - \mu v^+_{\rm cpt}(f_c(a))$ as a \emph{net} reward. Here, $\mu$ is given from the outer loop.
Accordingly, the scheduling policy for state $s\in\mathcal{S}$ is improved by
\begin{equation}\label{eq:opt-policy}
    \pi(s) \in \underset{a \in \mathcal{A}}{\operatorname{arg\,max}} ~ Q^{(i)}_\pi(s, a).
\end{equation}

The value iteration stops at the $i$-th step, and $\pi(s)$ from \eqref{eq:opt-policy} converges to the optimal policy, i.e., $\pi^*(s)$, given $\mu$, once the following criterion is met \cite{puterman2014markov}:
\begin{equation}\label{eq:sp-func}
    \operatorname{sp}\!\big(V^{(i)}_\pi - V^{(i-1)}_\pi\big) < \epsilon_\pi
\end{equation}
where $\epsilon_\pi>0$ denotes the desired convergence sensitivity. Also, $\operatorname{sp}:\mathbb{R}_0^+\rightarrow\mathbb{R}_0^+$ is the span function, defined as
\begin{equation}
     \operatorname{sp}\!\big(V^{(i)}_\pi\big) = \underset{s \in \mathcal{S}}{\operatorname{max}} ~ V^{(i)}_\pi(s) - \underset{s \in \mathcal{S}}{\operatorname{min}} ~ V^{(i)}_\pi(s)
\end{equation}
based on the span seminorm \cite[Section~6.6.1]{puterman2014markov}. Given Corollary~\ref{cor:cor_2} and the fact that every optimal class of policies has an aperiodic transition matrix under the modeled CMDP, the criterion in \eqref{eq:sp-func} is guaranteed to be satisfied after a finite number of iterations \cite[Theorem~8.5.4]{puterman2014markov}.

\subsubsection{Computing $\mu^*$}
As long as the cost constraint of $\mathcal{P}$ is not satisfied, i.e., while the outer loop is running, $\mathcal{L}(\pi^*, \mu)$ from \eqref{eq:lag} remains a non-increasing function of $\mu$ for some $\pi^*$. Consequently, the bisection method searches for the minimum Lagrange multiplier that satisfies the cost constraint. 

The search starts from an initial interval $[\mu_{-}^{(0)}, \mu_{+}^{(0)}]$ complying with $\mathcal{L}(\pi^*, \mu_{-}^{(0)})\mathcal{L}(\pi^*, \mu_{+}^{(0)}) < 0$ given $\pi^*$ from the inner loop. In each subsequent step, the Lagrange multiplier is updated to the midpoint of the current interval. In this context, we have $\mu^{(l)} = \frac{\mu_{-}^{(l)} + \mu_{+}^{(l)}}{2}$ for the $l$-th step, where $l\in\mathbb{N}$. The interval is updated after computing $\pi^*$ based on the new multiplier and evaluating whether the cost constraint is met.
The bisection search terminates when the stopping criterion, as defined below, is met:
\begin{equation}
    \big\lvert\mu_{+}^{(l)}-\mu_{-}^{(l)}\big\rvert < \epsilon_\mu
\end{equation}
where $\epsilon_\mu>0$ indicates the acceptable tolerance.
It can be shown that $\mathcal{L}(\pi^*, \mu)$ is a Lipschitz continuous function of $\mu$, with the Lipschitz constant given by
\begin{equation*}
    \left\lvert\, C_{\rm max} - \mathbb{E}_{\pi}\!\left[\,\sum_{t = 0}^{\infty} \gamma^t v^+_{\rm cpt}\big(f_c(\mathbf{1}_N^T \mathbf{A}(t) \mathbf{1}_M)\big) \,\Big|\,\pi \right]\right\rvert.
\end{equation*}
Thus, the bisection search converges to the optimal Lagrange multiplier within a finite number of steps \cite{Wood2009}. 

Once both loops terminate, the resulting scheduling policy, i.e., $\pi^*$, given by the Lagrange multiplier, i.e., $\mu^*$, is \emph{deterministic} if the expected discounted cumulative query cost exactly equals $C_{\rm max}$. Otherwise, the policy is \emph{randomized stationary}, derived by probabilistically mixing two deterministic policies, $\pi^-$ and $\pi^+$, with a computable probability of $\eta\in[0, 1]$, where
\begin{equation*}
    \pi^- = \underset{\mu \rightarrow \mu_{-}^{(l)}}{\lim} \pi, ~~ \pi^+ = \underset{\mu \rightarrow \mu_{+}^{(l)}}{\lim} \pi,
\end{equation*}
in the $l$-th step after which the bisection search terminates. Hence, for the state $s\in\mathcal{S}$, we have
\begin{equation}
    \pi^*(s) \leftarrow \eta \pi^-(s) + (1-\eta)\pi^+(s).
\end{equation}
This means that the scheduling policy is randomly selected such that $\pi^*(s) = \pi^-(s)$ with probability $\eta$, and $\pi^*(s) = \pi^+(s)$, with probability $1-\eta$, $\forall s\in\mathcal{S}$.

\subsubsection{Complexity analysis}
Performing value iteration for a number of iterations that scale \emph{polynomially} with $\lvert \mathcal{S}\rvert$, $|\mathcal{A}|$, and $\frac{1}{1-\gamma}\log(\frac{1}{1-\gamma})$ ensures the computation of the optimal scheduling policy, given a fixed $\gamma$ and the Lagrange multiplier \cite{littman2013complexity}. Additionally, the bisection search requires $\lceil \log_2(\frac{\mu_{+}^{(0)}}{\epsilon_\mu})\rceil$ steps to find the optimal Lagrange multiplier with a tolerance of $\epsilon_\mu$ under the derived policy, where $\mu_{+}^{(0)}$ is the upper bound of the initial interval. Hence, the overall time complexity of Algorithm~\ref{alg_1}, accounting for both the inner and outer loops, is given by
\begin{equation*}
\mathcal{O}\!\left(\frac{\lvert \mathcal{S}\rvert|\mathcal{A}|}{1-\gamma}\log\!\left(\frac{1}{1-\gamma}\right) \!\log\!\left(\frac{\mu_{+}^{(0)}}{\epsilon_\mu}\right)\!\right).
\end{equation*}

The algorithm's complexity increases with larger state and action spaces, a wider initial interval for the Lagrange multiplier, and as $\gamma \rightarrow 1$ and $\epsilon_\mu \rightarrow 0$. Therefore, the model-based solution faces the significant drawback of high computational and processing demands when scaling to larger models.

\section{Model-Free Solutions for Effect-Aware Scheduling}\label{sec5}
In the following, we introduce \emph{model-free} solutions for deriving effect-aware scheduling policies, tackling the scalability challenges associated with the model-based approach (see Section~\ref{sec4}). 

\subsection{Model-Free Environment}
To address the limitations of relying on a specific model, we consider an \emph{environment} that encompasses the source, sensing agents, error-prone channels for transporting noisy updates to the hub, and the evolving knowledge base. In this model-free approach, the hub interacts with the environment by performing actions, making observations, and receiving rewards, all without prior knowledge of the source's dynamics, the potential usefulness of the updates to be queried, or the status of the update channels.

Within this framework, the hub takes an action $a(t)\in\mathcal{A}$ while the environment is in state $s(t)\in\mathcal{S}$ during the $t$-th slot. 
The action is passed to the environment, and after consulting the updated knowledge base, the hub observes the new state of the environment, $s(t+1)$. Based on this transition, the immediate net reward $r(t)$ resulting from the chosen action is computed. The recurring interaction between the hub and the environment over $T_{\rm e}$ steps is represented as a sequence $\langle s(t), a(t), r(t), s(t+1) \rangle_{t=0}^{T_{\rm e}}$. Specifically, the reward $r(t)$ at the $t$-th slot is given by
\begin{align}\label{eq:reward_learn}
    r(t) &= r_\mu(s(t), a(t), s(t+1))\nonumber \\
    &=v_{\rm cpt}(\operatorname{GoE}(t+1)) - \mu v^+_{\rm cpt}(f_c(a(t)))
\end{align}
where $\mu$ is the Lagrange multiplier (see Section~\ref{sec4}).

This formalism defines the state space $\mathcal{S}$, the action space $\mathcal{A}$, and the net rewards of the modeled CMDP in Section~\ref{sec3:partB}. Leveraging the model-free approach, we develop a \emph{learning-based} iterative algorithm to derive policies for solving the dual scheduling problem, i.e., $\widehat{\mathcal{P}}$, from Section~\ref{sec4:partA}.

\subsection{Learning-Based Iterative Algorithm}
We adopt a similar iterative process to that outlined in Algorithm~\ref{alg_1}. The main distinction here is that we use model-free, learning-based solutions to find the class of effect-aware scheduling policies, $\pi$, within the inner loop.

\subsubsection{Computing $\pi$}
To derive scheduling policies, we adapt Deep Q-Network (DQN) \cite{DQN}, Advantage Actor--Critic (A2C) \cite{A2C}, and Proximal Policy Optimization (PPO) \cite{PPO}. 
We begin with the off-policy DQN algorithm, which approximates the optimal state--action value function $Q^*_\pi(s(t), a(t))$ using a deep neural network. The Q-values are updated by minimizing the temporal-difference loss between predicted and target Q-values, as defined by the Bellman equation. In our formulation, the Q-function and value function are computed as in \eqref{eq:V-func}--\eqref{eq:Q-func}, based on the net reward defined in \eqref{eq:reward_learn}. To improve training stability, DQN incorporates two key techniques: \emph{experience replay}, which mitigates sample correlation by learning from a buffer of past transitions, and a \emph{target network}, which is updated at a slower rate to stabilize Q-value estimates.

In contrast, A2C is an on-policy, synchronous actor--critic method in which the actor learns a stochastic policy $\pi(s(t))$, and the critic estimates the value function $V_\pi(s(t))$ for $s(t)\in\mathcal{S}$. To reduce the variance of policy gradient updates, it employs the advantage function $A_\pi(s(t), a(t))=Q_\pi(s(t), a(t))-V_\pi(s(t))$, which measures how much better action $a(t)$ is relative to the expected value of state $s(t)$ at the $t$-th time slot. A2C performs multiple parallel rollouts across environments, aggregates gradients, and synchronously updates both the actor and critic networks. This design fosters more stable training while preserving the sample efficiency of actor--critic methods.

Finally, PPO, another on-policy algorithm, also follows an actor--critic architecture while enhancing training stability through constrained policy updates. It directly optimizes a stochastic policy by maximizing a \emph{clipped surrogate objective}, which penalizes large deviations from the previous policy. This clipping mechanism limits abrupt changes in the action probability ratio between successive policies, thereby mitigating the risk of overly aggressive updates that can destabilize learning. By enforcing conservative yet effective policy adjustments, PPO improves upon earlier policy gradient methods and can offer more reliable convergence across a wide range of tasks.

These algorithms exhibit complementary strengths and trade-offs: A2C and PPO, due to their on-policy design, provide stable policy updates in dynamic environments, making them effective in scenarios where adaptability is crucial. However, this comes at the cost of lower sample efficiency, as they require extensive interaction with the environment, potentially slowing convergence. In contrast, DQN benefits from off-policy learning and experience replay, enabling higher sample efficiency and making it particularly well-suited for discrete action spaces. Nevertheless, DQN may face challenges in high-dimensional or partially observable environments, where its value-based estimation can lead to instability and limited generalization across diverse state–action pairs.\footnote{More recent or specialized algorithms, such as Soft Actor--Critic (SAC) and Twin Delayed Deep Deterministic Policy Gradient (TD3), are excluded from consideration, as they are tailored for continuous control tasks and introduce additional complexity without offering clear benefits for the discrete scheduling objective addressed in this work.}

\subsubsection{Computing $\mu^*$}
We employ the same approach as in Algorithm~\ref{alg_1} to determine the optimal Lagrange multiplier for a given policy from the inner loop. In this context, the bisection search method is employed in the outer loop to iteratively and gradually identify the minimum Lagrange multiplier that meets the cost constraint of the scheduling problem $\mathcal{P}$, as formulated in \eqref{eq:opt-I}.

\section{Simulation Results}
In this section, we evaluate the performance of the proposed model-based and model-free solutions outlined in Sections~\ref{sec4} and \ref{sec5}, respectively, within the context of effect-aware query scheduling. To assess their effectiveness, we compare these solutions against several well-established benchmark approaches.

\subsection{Setup and Assumptions}
We consider a system with $N = 4$ SAs observing a source characterized by $M = 2$ attributes and $K = 4$ AAs performing actions over $T = 1,000$ slots. The usefulness of an update on the $m$-th attribute is mapped to a value within the range $[0, 1]$, determined by applying 
\begin{equation}
    g_m(t; y_m(t)) = \operatorname{min}\left\{ 1, \frac{y_m^{\alpha_m-1}(t) (1 - y_m(t))^{\beta_m-1}}{\operatorname{B}(\alpha_m, \beta_m)} \right\}
\end{equation}
at the $t$-th slot, where $\alpha_m, \beta_m>0, \forall m$, are shape parameters, and $\operatorname{B}(\cdot, \cdot)$ is Beta function. Besides, we consider $\operatorname{GoE}_m(t) = \frac{u_m(t)}{\Delta_m(t)}, \forall m$, and the CPT-based value function for an arbitrary $x\in\mathbb{R}$ is given by \cite{tversky1992advances}
\begin{align}
    v_{\rm cpt}(x) = 
    \begin{cases}
         v^+_{\rm cpt}(x) = (x - x_{\rm ref})^{\alpha_{\rm cpt}}, ~x \geq x_{\rm ref};\\
         v^-_{\rm cpt}(x) = -\lambda_{\rm cpt} (x_{\rm ref} - x)^{\beta_{\rm cpt}}, ~x < x_{\rm ref},
    \end{cases}
\end{align}
with $\alpha_{\rm cpt}=\beta_{\rm cpt}=0.5$, and $\lambda_{\rm cpt}=2$ being shape parameters under the given reference point $x_{\rm ref}$. For simplicity and to facilitate comparison with non-probabilistic scheduling methods, we assume $w_{\rm cpt}(x)=x, \forall x$. 

Finally, the cost constraint is enforced by introducing a \emph{cost flexibility index} $C_{\rm flex}$ multiplied by the discounted cumulative cost incurred from querying across \emph{all} slots. Thus, we have
\begin{equation}\label{eq:c-flex}
    C_{\rm max} = C_{\rm flex} \left[\sum_{t = 0}^{\infty} \gamma^t v^+_{\rm cpt}\big(f_c(1)\big)\right] = C_{\rm flex} \frac{ v^+_{\rm cpt}\big(f_c(1)\big)}{1 - \gamma}
\end{equation}
where $f_c(1)$ indicates the fixed cost per query. Unless stated otherwise, the default simulation parameter values are outlined in Table~\ref{tab:params}.
\noindent
\begin{table}
    \centering
    \caption{Parameters for simulation results.}\label{tab:params}
    \begin{tabular}{|l|c|c||l|c|c|}
    \hline
         {\!\!\footnotesize \textbf{Parameter}\!\!}&
         {\!\!\footnotesize \textbf{Symbol}\!\!}&
         {\!\!\footnotesize \textbf{Value}\!\!}&
         {\!\!\footnotesize \textbf{Parameter}\!\!}&
         {\!\!\footnotesize \textbf{Symbol}\!\!}&
         {\!\!\!\footnotesize \textbf{Value}\!\!\!} \\
         \hline
         \hline
         {\!\!\footnotesize Number of slots\!\!}&
         {\!\!\footnotesize $T$\!\!}& 
         {\!\!\footnotesize $10^3$\!\!}&
         \multirow{3}{*}{\makecell[l]{\!\!\footnotesize Parameters\!\! \\\!\!\footnotesize shaping $v_{\rm cpt}(\cdot)$\!\!\!}}
         &{\!\!\footnotesize $\alpha_{\rm cpt}$\!\!}& \multirow{2}{*}{\!\!\footnotesize $0.5$\!\!}\\
         \cline{1-3}\cline{5-5}
         {\!\!\footnotesize Number of SAs\!\!}&
         {\!\!\footnotesize $N$\!\!}& 
         {\!\!\footnotesize $4$\!\!}&
         &{\!\!\footnotesize $\beta_{\rm cpt}$\!\!}&\\
         \cline{1-3}\cline{5-6}
         {\!\!\footnotesize Number of attributes\!\!\!}&
         {\!\!\footnotesize $M$\!\!}&
         {\!\!\footnotesize $2$\!\!}&  
         &{\!\!\footnotesize $\lambda_{\rm cpt}$\!\!}& 
         {\!\!\footnotesize $2$\!\!}\\
         \hline
         {\!\!\footnotesize Number of AAs\!\!}&
         {\!\!\footnotesize $K$\!\!}&
         {\!\!\footnotesize $4$\!\!}&  
         {\!\!\footnotesize Refs. point\!\!}&
         {\!\!\footnotesize $\operatorname{GoE}_{\rm ref}$\!\!}&{\!\!\footnotesize $0.2$\!\!}\\
         \hline
         \makecell[l]{\!\!\footnotesize Corr. observation\!\! \\\!\!\footnotesize probability\!\!}&
         \makecell{\!\!\footnotesize $p_{{\rm{o}},nm},$\!\!\\ \!\!\footnotesize $\forall n, m$\!\!}&
         {\!\!\footnotesize $0.8$\!\!}&
         {\!\!\footnotesize Cost per query\!\!}&
         {\!\!\footnotesize $f_c(1)$\!\!}& 
         {\!\!\footnotesize $0.5$\!\!}\\
         \hline
         {\!\!\footnotesize Required attributes\!\!}&
         {\!\!\footnotesize $\lvert\mathcal{M}_k\rvert, \forall k$\!\!}&
         {\!\!\footnotesize $2$\!\!}
         & 
         {\!\!\footnotesize Cost flex. index\!\!\!}&
         {\!\!\footnotesize $C_{\rm flex}$\!\!}& {\!\!\footnotesize $0.75$\!\!}\\
         \hline
         {\!\!\footnotesize Erasure probability\!\!}&
         {\!\!\footnotesize $p_{{\rm{e}}, n}, \forall n$\!\!}& 
         {\!\!\footnotesize $0.2$\!\!}&  
         {\!\!\footnotesize Maximum AoI\!\!}&
         {\!\!\footnotesize $\Delta_{\rm max}$\!\!}&
         {\!\!\footnotesize $4$\!\!}
         \\
         \hline
         {\!\!\footnotesize Discount factor\!\!}&
         {\!\!\footnotesize $\gamma$\!\!}& 
         {\!\!\footnotesize $0.9$\!\!}&
         {\!\!\footnotesize Converg. sens.\!\!}&
         {\!\!\footnotesize $\epsilon_\pi$\!\!}
         &\multirow{2}{*}{\!\!\footnotesize\!$10^{-6}$\!\!\!}\\
         \cline{1-5}
         \multirow{2}{*}{\makecell[l]{\!\!\footnotesize Shape parameters\!\! \\\!\!\footnotesize for $g_{m}(\cdot;\cdot), \forall m$\!\!}}&
         {\!\!\footnotesize $\{\alpha_1, \alpha_2\}$\!\!}&
         {\!\!\footnotesize\!$\{0.5, 2\}$\!\!\!}& 
         {\!\!\footnotesize Tolerance sens.\!\!}&  
         {\!\!\footnotesize $\epsilon_\mu$\!\!}&\\
         \cline{2-6}
         &{\!\!\footnotesize $\{\beta_1, \beta_2\}$\!\!}&
         {\!\!\footnotesize\!$\{0.5, 5\}$\!\!\!}&
         {\!\!\footnotesize Mixing factor\!\!}&  
         {\!\!\footnotesize $\eta$\!\!}&{\!\!\footnotesize $0.5$\!\!}\\
         \hline
         {\!\!\footnotesize Size of usefulness\!\!}&
         {\!\!\footnotesize $\lvert\mathcal{U}\rvert$\!\!}& {\!\!\footnotesize $4$\!\!}
         & --
         & --
         & -- \\
         \hline
    \end{tabular}
\end{table}

To implement the DRL algorithms, namely A2C, PPO, and DQN, we mostly follow the default hyperparameter settings as considered in their respective original papers, i.e., \cite{A2C}, \cite{PPO}, and \cite{DQN}. However, we adjust the discount factor according to Table~\ref{tab:params}. We employ the Adaptive Moment Estimation (Adam) optimizer for both DQN and PPO, whereas the Root Mean Square Propagation (RMSprop) optimizer is used for A2C. For each DRL algorithm, a model is trained over $100$ episodes of interaction. Each episode comprises $10,000$ time slots, generating a total of $T_{\rm e} = 10^6$ environment steps. The key hyperparameters for these algorithms, along with their initial configurations, are summarized in Table~\ref{tab:RL-hyparams}.
\noindent
\begin{table}
    \centering
    \caption{Hyperparameters for DRL algorithms.}\label{tab:RL-hyparams}
    \begin{tabular}{|l|c||c||c|}
    \hline
         {\!\!\footnotesize \textbf{Hyperparameter}}&
         {\footnotesize \textbf{DQN}}&
         {\footnotesize \textbf{A2C}}&
         {\footnotesize \textbf{PPO}} \\
    \hline
    \hline
         {\!\!\footnotesize Neural network model}&
         \multicolumn{3}{c|}{\footnotesize Multi-layer perceptron (MLP)}\\
    \hline
         {\!\!\footnotesize Hidden layers activation}&
         \multicolumn{3}{c|}{\footnotesize Rectified linear unit (ReLU)}\\
    \hline
         {\!\!\footnotesize Policy net. output activation\!}&
         -- &
         \multicolumn{2}{c|}{\footnotesize Softmax}\\
    \hline
         {\!\!\footnotesize Optimizer class}&
         {\footnotesize Adam}& 
         {\footnotesize RMSprop}&
         {\footnotesize Adam}\\
    \hline
         {\!\!\footnotesize Learning rate}&
         {\footnotesize $10^{-4}$}& 
         {\footnotesize $3\!\times\! 10^{-4}$}&
         {\footnotesize $3\!\times\! 10^{-4}$}\\
    \hline
        {\!\!\footnotesize Environment steps}&
         \multicolumn{3}{c|}{\footnotesize $10^2\!\times\!10^4$}\\
    \hline
         {\!\!\footnotesize Replay buffer size}&
         {\footnotesize $10^{6}$}& 
         -- &
         -- \\
    \hline
         {\!\!\footnotesize Rollout buffer size}&
         --& 
         {\footnotesize $20\!\times\!8$}&
         {\footnotesize $1\!\times\!2048$}\\
    \hline
         {\!\!\footnotesize Mini-batch size}&
         {\footnotesize $32$}& 
         -- &
         {\footnotesize $64$}\\
    \hline
         {\!\!\footnotesize Number of epochs}&
         -- & 
         -- &
         {\footnotesize $10$}\\
    \hline
         {\!\!\footnotesize Bias-variance trade-off}&
         -- & 
         {\footnotesize $1$}&
         {\footnotesize $0.95$}\\
    \hline
    \end{tabular}
\end{table}

\subsection{Results and Discussion}
To conduct the comparison, we use four benchmark scheduling approaches: (1) \emph{weighted round-robin} (WRR), (2) \emph{lowest-weighted-grade-first} (LWGF), and scheduling strategies based on either (3) a \emph{uniform} distribution or (4) a \emph{Markovian} process. In the LWGF approach, the attribute with the lowest weighted GoE from the previous time slot is prioritized for scheduling in the current slot. Both WRR and LWGF assign weights to attributes based on their importance to the AAs. The transition probability matrix for the Markovian process is designed to ensure that the cost constraint is satisfied over the long term.

Fig.~\ref{Fig:cdf_goe} depicts the cumulative distribution function (CDF) of the long-term CPT-based total GoE for different scheduling methods over $1,000$ times slots, comparing the model-based effect-aware scheduling as a reference against (a) benchmark scheduling approaches and (b) model-free effect-aware ones. According to Fig.~\ref{Fig:cdf_goe_nonlearn}, querying under either a uniform or a Markovian process fails to achieve high performance, primarily because these approaches neglect the effectiveness of updates. While WRR achieves higher performance, it faces scalability challenges as the number of attributes increases. Notably, model-based effect-aware scheduling achieves performance comparable to LWGF, with only a $1.37\%$ difference, the latter delivering the highest effectiveness overall. 
\begin{figure}[t!]
    \centering
    \subfloat[]{
    \includegraphics[width=0.4\textwidth]{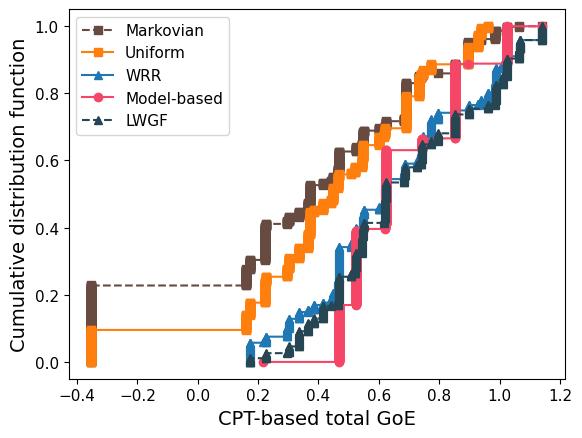}
    \label{Fig:cdf_goe_nonlearn}
    }
    \hfil
    \subfloat[]{
    \includegraphics[width=0.4\textwidth]{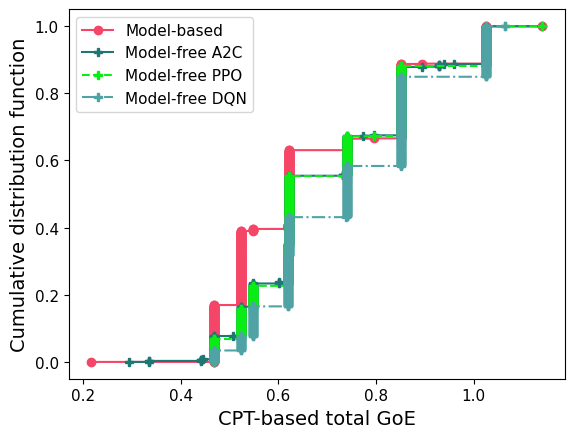}
    \label{Fig:cdf_goe_learn}
    }
    \caption{CDF of the long-term CPT-based total GoE over $1,000$ times slots.}\label{Fig:cdf_goe}
\end{figure}

However, looking at the bar chart in Fig.~\ref{Fig:query_stats}, it becomes clear that LWGF scheduling results in a significantly higher frequency of hub queries to the SAs compared to model-based scheduling. Furthermore, Fig.~\ref{Fig:cdf_goe_learn} shows that employing model-free effect-aware scheduling can improve long-term effectiveness by up to $10.57\%$ compared to the model-based approach, with $14.11\%$ increase in the number of queries sent. In this scenario, DQN outperforms both PPO and A2C, delivering $5.37\%$ and $5.77\%$ higher effectiveness, respectively.

The bar chart in Fig.~\ref{Fig:query_stats} illustrates the percentage of queries sent within $1,000$ slots, along with the percentages of successful update communication. Despite achieving similar success rates in communicating updates across all approaches, model-based, model-free PPO, and model-free A2C stand out for sending (approximately) the fewest queries over the evaluated period. Specifically, these approaches send around $20\%$ fewer queries compared to LWGF, WRR, and uniform scheduling, $14\%$ fewer than model-free DQN, and $9\%$ fewer than the Markovian process. This highlights that effect-aware scheduling, on average, strikes the best \emph{balance} between effectiveness and query efficiency, enabling higher efficiency in generating and transmitting updates while maintaining a high level of effectiveness.
\begin{figure}[t!]
    \centering
    \includegraphics[width=0.4\textwidth]{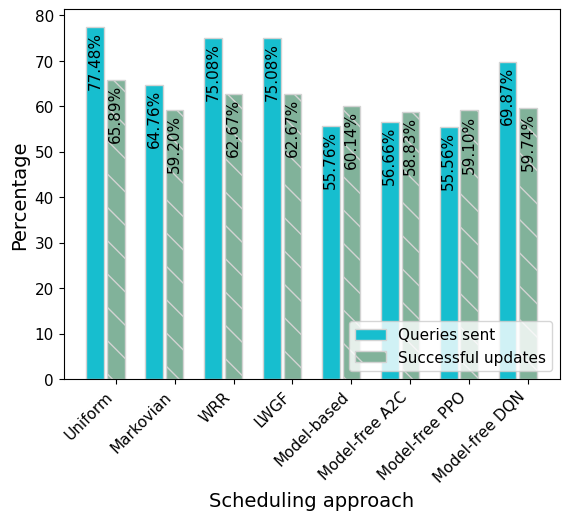}
    \caption{Percentages of queries sent over $1,000$ time slots, alongside the statistics of successful and failed update communications.}
    \label{Fig:query_stats}
\end{figure}

Expanding on the time-variant effectiveness provided by different approaches, Fig.~\ref{Fig:goe_vs_time} depicts the instantaneous CPT-based total GoE at each slot over the interval $[0, 50]$. It highlights that, unlike other approaches that show significant fluctuations in performance, both model-based and model-free effect-aware scheduling demonstrate consistent and stable performance, ensuring a minimum level of effectiveness. Specifically, in Fig.~\ref{Fig:goe_vs_time_learn}, effect-aware scheduling consistently maintains a CPT-based total GoE of $0.29$ or higher throughout the observed period after passing the initial state, i.e., $t=0$. In contrast, using LWGF, WRR, uniform, and Markovian, the CPT-based total GoE could decrease to as low as $0.23$, $0.17$, $-0.35$, and $-0.35$, respectively, as shown in Fig.~\ref{Fig:goe_vs_time_nonlearn}. This confirms that effect-aware scheduling outperforms the benchmark approaches in consistently maintaining reliable and stable performance.
\begin{figure}[t!]
    \centering
    \subfloat[]{
    \includegraphics[width=0.4\textwidth]{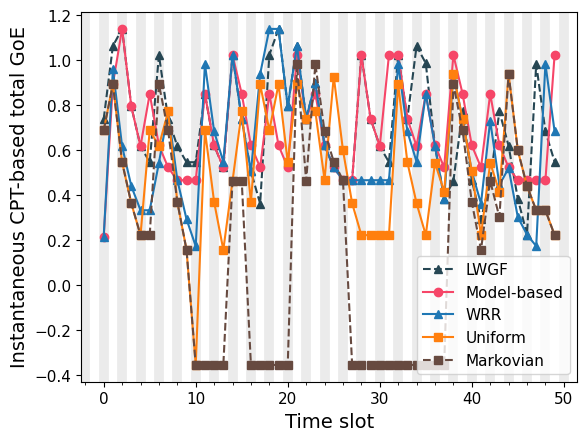}
    \label{Fig:goe_vs_time_nonlearn}
    }
    \hfil
    \subfloat[]{
    \includegraphics[width=0.4\textwidth]{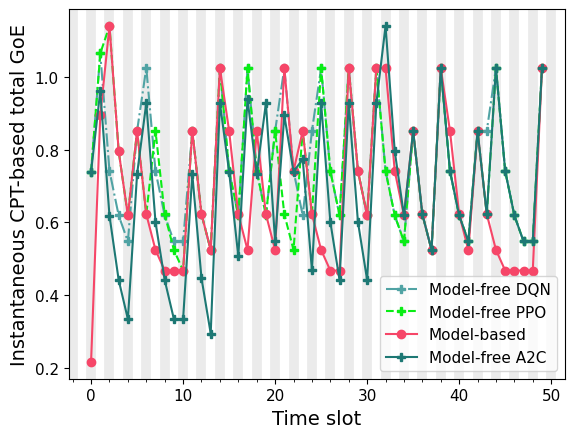}
    \label{Fig:goe_vs_time_learn}
    }
    \caption{Evolution of the instantaneous CPT-based total GoE over time within the period of $[0, 50]$.}\label{Fig:goe_vs_time}
\end{figure}

To elaborate on the computational complexity of the considered model-free DRL algorithms, Fig.~\ref{Fig:convergence} illustrates their convergence behavior by representing the number of episodes required to attain stable performance.
In this figure, the episodic rewards for each algorithm are smoothed using a $20$-episode moving average to better highlight convergence trends. Under the default settings, model-free PPO demonstrates the fastest convergence, stabilizing approximately $58.33\%$ and $75.66\%$ faster than model-free DQN and A2C, respectively. Notably, DQN and A2C exhibit intermittent performance drops, which can be attributed to their exploration strategies. Nevertheless, these algorithms eventually converge to stable policies as the exploration--exploitation balance is progressively optimized. By applying the dual-loop iterative procedure described in Algorithm~\ref{alg_1}, obtaining the optimal policies along with the associated Lagrange multiplier requires, on average, $23$ bisection search steps in the outer loop across all model-free algorithms. Specifically, deriving the final optimal policies for query scheduling involves a total of $1,155$ ($21 \times 55$), $3,036$ ($23 \times 132$), and $5,876$ ($26 \times 226$) episodes for PPO, DQN, and A2C, respectively. In contrast, the model-based solution converges in just $378$ ($18 \times 21$) iterations, highlighting its significantly lower computational cost. This translates into around $67\%$--$93\%$ reduction in the number of training episodes, underscoring the faster convergence and reduced complexity of the model-based alternative.
\begin{figure}[t!]
    \centering
    \includegraphics[width=0.4\textwidth]{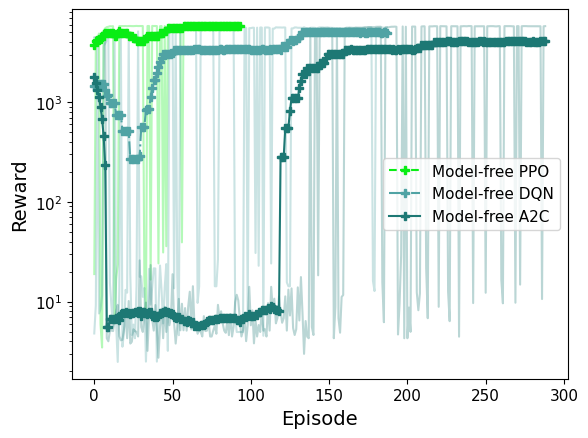}
    \caption{Convergence performance of different model-free DRL algorithms in terms of episodic rewards and $20$-episode moving average rewards.}
    \label{Fig:convergence}
\end{figure}

To study the interplay between the average CPT-based total GoE and its corresponding reference point, i.e., $\operatorname{GoE}_{\rm ref}$, we plot Fig.~\ref{Fig:goe_vs_ref}. This graph depicts the effect of increasing the reference point on the effectiveness and performance of different scheduling approaches. As $\operatorname{GoE}_{\rm ref}$ increases, the effectiveness delivered by all approaches decreases. This decrease can be attributed to the activation of the loss component in the CPT-based value function at higher levels of total GoE. However, as the reference point increases, the performance gap between effect-aware scheduling and the other approaches widens, reaching its peak at $\operatorname{GoE}_{\rm ref}=0.5$. Beyond this critical point, the performance of both model-based and model-free approaches starts to degrade. Notably, after $\operatorname{GoE}_{\rm ref}=0.95$, the benchmark approaches outperform all effect-aware ones. 

The reversal likely arises because increasing the reference point compresses the net reward differences across the $\lvert \mathcal{S}\rvert=256$ states, which in turn distorts the optimal value of the Lagrange multiplier. Consequently, this leads to misalignment in the scheduling policies, both model-based and model-free, with respect to the intended effectiveness objectives. This degradation is a direct outcome of the CPT-based evaluation, where even slight changes in perceived gains or losses can substantially affect action preferences. The extent of this distortion depends on the specific effect-aware solution, illustrating a key insight of integrating CPT: higher expectations---reflected by an elevated reference point---can paradoxically lead to lower effectiveness due to overly risk-averse or misdirected decisions.
\begin{figure}[t!]
    \centering
    \includegraphics[width=0.4\textwidth]{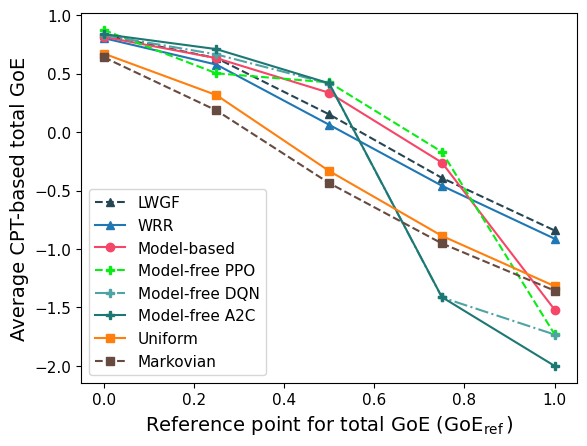}
    \caption{Relationship between the average CPT-based total GoE and its modeling reference point, along with the average percentage of queries sent across different approaches over $1,000$ time slots.}
    \label{Fig:goe_vs_ref}
\end{figure}

Moreover, Fig.~\ref{Fig:goe_vs_cost_flex} demonstrates how lowering the cost constraint affects the average CPT-based total GoE. This effect is analyzed using the cost flexibility index, denoted as $C_{\rm flex}$ in \eqref{eq:c-flex}, where a lower index corresponds to a stricter constraint. The figure highlights the robust performance of effect-aware scheduling under strict cost constraints, consistently outperforming the benchmark approaches. Decreasing the maximum induced cost, particularly below $C_{\rm flex}=0.52$, amplifies the effectiveness performance gap between effect-aware scheduling and the other approaches. For instance, at $C_{\rm flex}=0.286$, the average CPT-based total GoE achieved by model-based and model-free effect-aware scheduling is $4.22$ and $3.25$ times higher than that achieved using LWGF, respectively. This indicates that in scenarios where resources for computation, update generation, or communication, such as energy, are severely limited, effect-aware scheduling delivers significantly superior performance.
\begin{figure}[t!]
    \centering
    \includegraphics[width=0.4\textwidth]{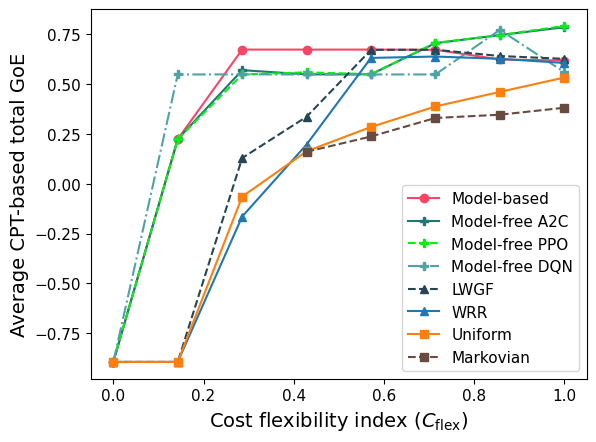}
    \caption{Interplay between the average CPT-based total GoE and the cost flexibility index across $1,000$ slots.}
    \label{Fig:goe_vs_cost_flex}
\end{figure}

To study the scalability of different effect-aware scheduling approaches, Fig.~\ref{Fig:goe_vs_no_attribute} illustrates the impact of the number of attributes $M$ on the average CPT-based total GoE over $1,000$ time slots. In this figure, we assume $\lvert\mathcal{M}_k\rvert = M, \forall k$. The graph shows that model-based scheduling faces significant scalability challenges as the number of attributes increases. The reason behind this could be the rapid growth in the number of defined CMDP states, which grows \emph{exponentially} as $16^M$ with the rise of $M$ (see Section~\ref{sec3:partB}). The increase in the number of states may lead to transition probabilities converging to similar values, causing the value iteration process to generate a fixed class of policies. In contrast, model-free approaches demonstrate high scalability at the expense of requiring a larger number of environment steps to achieve similar performance.
\begin{figure}[t!]
    \centering
    \includegraphics[width=0.4\textwidth]{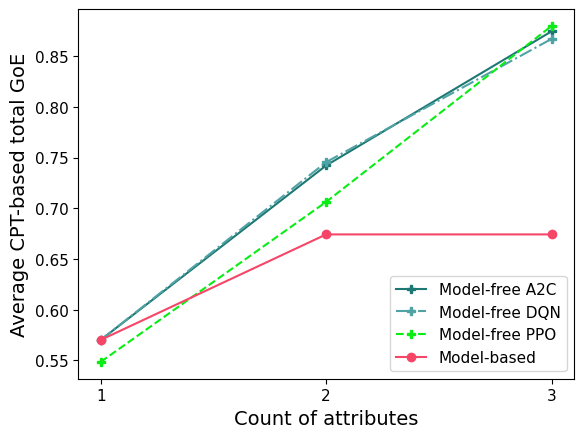}
    \caption{Effect of the number (count) of attributes on the average CPT-based total GoE over $1,000$ slots.}
    \label{Fig:goe_vs_no_attribute}
\end{figure}

Regarding the scalability of model-free effect-aware scheduling with a large number of attributes, as depicted in Fig.~\ref{Fig:goe_vs_no_attribute}, we analyze how increasing the query limit affects effectiveness. Under the constraint $\mathbf{1}_N^T \mathbf{A}(t) \mathbf{1}_M \leq 1$ in \eqref{eq:opt-I}, the hub is essentially restricted to sending at most one query during each sampling interval $t$. Fig.~\ref{Fig:goe_vs_query_limit} illustrates the average CPT-based total GoE over $1,000$ time slots as a function of the query limit, using model-free approaches. Their performance is compared against that of LWGF and uniform scheduling.\footnote{Setting the query limit to more than one increases the complexity of the model-free solution and reduces its appeal for exploration in this context, as this requires corresponding modifications to the CMDP model in Section~\ref{sec3:partB}.} To plot the corresponding curves, we assume $\lvert\mathcal{M}_k\rvert=3, \forall k$ and vary the query limit from $1$ to $3$.\footnote{Given that update packets are typically small, we assume that multiple attribute updates can be transmitted simultaneously through orthogonal channels or using time-division techniques.} The figure shows that increasing the query limit enhances effectiveness across all approaches. Notably, model-free effect-aware scheduling consistently outperforms LWGF on average, particularly when $\mathbf{1}_N^T \mathbf{A}(t) \mathbf{1}_M \leq 2, \forall t$. Both approaches demonstrate a significant performance advantage over uniform scheduling. When $\mathbf{1}_N^T \mathbf{A}(t) \mathbf{1}_M \leq 3$, the hub can query all attributes simultaneously while still adhering to the cost constraint. At this point, the performance of model-free effect-aware scheduling and LWGF converges, becoming nearly identical. 
\begin{figure}[t!]
    \centering
    \includegraphics[width=0.4\textwidth]{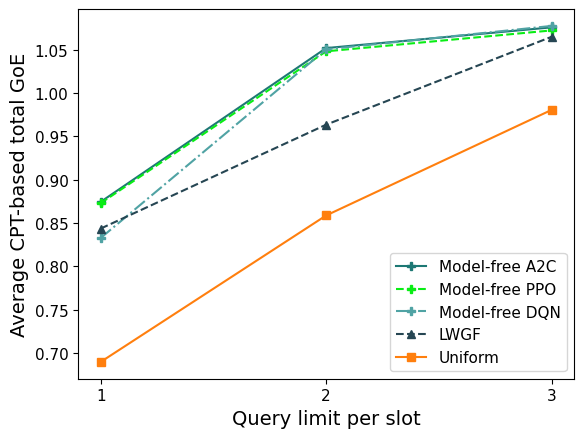}
    \caption{Relationship between the query limit and the average CPT-based total GoE through $1,000$ time slots.}
    \label{Fig:goe_vs_query_limit}
\end{figure}

\section{Conclusion}
In this paper, we studied the query scheduling problem in end-to-end status update systems operating under a pull-based model. We considered multiple SAs observing a source with various attributes and communicating their observations in the form of updates in response to queries from the hub. The hub is responsible for deciding when to query updates and for constructing a knowledge base to enable AAs to achieve their heterogeneous goals. We introduced the GoE metric and formulated the scheduling problem based on this metric, integrating a risk-averse framework. Specifically, we aimed to maximize the expected discounted sum of the CPT-based total GoE while ensuring compliance with a cost constraint. To solve this problem and design effect-aware scheduling policies, we developed model-based and model-free solutions, supported by their respective iterative algorithms. We then evaluated the performance of these solutions in terms of effectiveness and compared them against benchmark methods. Our results demonstrated that the model-based method could improve the efficiency of sent queries by up to $20\%$ while preserving a level of effectiveness comparable to the best benchmark method, namely LWGF. Conversely, model-free solutions showed the potential to enhance long-term effectiveness by up to $9.2\%$ than the other approaches, albeit at the cost of higher computational complexity. This improvement was particularly pronounced under strict cost constraints, where effect-aware scheduling policies can significantly boost effectiveness and outperform others. Our findings highlighted that the model-free solutions excel in scalability while maintaining high levels of effectiveness, making them particularly appealing for large-scale or dynamic decision-making systems.

\balance
\bibliographystyle{IEEEtran}
\bibliography{References.bib}

\end{document}